\newif\ifarxiv
\newcommand{\ov}[1]{\overline{#1}}
\newcommand{\ovar}[1]{\overrightarrow{#1}}
\newcommand{\Cell}{\mathit{Cell}}
\newcommand{\Cells}{\mathit{Cells}}
\newcommand{\Vor}{\mathrm{Vor}}
\newcommand{\Sample}{\mathit{Sample}}
\newcommand{\Piece}{\mathit{Piece}}
\newcommand{\Pieces}{\mathit{Pieces}}
\newcommand{\Sep}{\mathit{Sep}}
\newcommand{\Seps}{\mathit{Seps}}
\newcommand{\PSep}{\mathit{PSep}}
\newcommand{\Hull}{\mathit{Hull}}
\newcommand{\NN}{\mathit{NN}}
\newcommand{\old}{\mathit{old}}
\newcommand{\new}{\mathit{new}}
\newcommand{\nearest}{\mathit{nearest}}
\newcommand{\cur}{\mathtt{cur}}
\newcommand{\init}{\mathtt{init}}
\newcommand{\mcS}{\mathcal{S}}
\newcommand{\mcT}{\mathcal{T}}
\newcommand{\oT}{\overline{T}}
\newcommand{\Voron}{\mathit{Vor}}
\newcommand{\eps}{\varepsilon}
\newtheorem{theorem}{Theorem}
\newtheorem{lemma}{Lemma}
\newtheorem{corollary}{Corollary}
\DeclareMathOperator*{\argmin}{arg\,min}
\title{Incremental Planar Nearest Neighbor Queries with  Optimal Query Time} 
\author{John Iacono\thanks{Universit\'{e} libre de Bruxelles, Belgium. Research suppored by the Fonds de la Recherche Scientifique - FNRS} \and Yakov Nekrich\thanks{Michigan Technological University, USA. Supported by the National Science Foundation under NSF grant 2203278.}}
\date{}
\newif\ifsocg
\newcommand{\ov}[1]{\overline{#1}}
\newcommand{\ovar}[1]{\overrightarrow{#1}}
\newcommand{\Cell}{\mathit{Cell}}
\newcommand{\Cells}{\mathit{Cells}}
\newcommand{\Vor}{\mathrm{Vor}}
\newcommand{\Sample}{\mathit{Sample}}
\newcommand{\Piece}{\mathit{Piece}}
\newcommand{\Pieces}{\mathit{Pieces}}
\newcommand{\Sep}{\mathit{Sep}}
\newcommand{\Seps}{\mathit{Seps}}
\newcommand{\PSep}{\mathit{PSep}}
\newcommand{\Hull}{\mathit{Hull}}
\newcommand{\NN}{\mathit{NN}}
\newcommand{\old}{\mathit{old}}
\newcommand{\new}{\mathit{new}}
\newcommand{\nearest}{\mathit{nearest}}
\newcommand{\cur}{\mathtt{cur}}
\newcommand{\init}{\mathtt{init}}
\newcommand{\mcS}{\mathcal{S}}
\newcommand{\mcT}{\mathcal{T}}
\newcommand{\oT}{\overline{T}}
\newcommand{\Voron}{\mathit{Vor}}
\newcommand{\eps}{\varepsilon}
\DeclareMathOperator*{\argmin}{arg\,min}
\title{Incremental Planar Nearest Neighbor Queries with  Optimal Query Time} 
\titlerunning{Incremental Planar Nearest Neighbor Queries with  Optimal Query Time} 
\author{John Iacono}{Universit\'{e} libre de Bruxelles, Belgium} {john@johniacono.com}{}{}
\author{Yakov Nekrich}{Michigan Technological University, USA}{ yakov.nekrich@googlemail.com}{}{}
\authorrunning{J. Iacono and Y. Nekrich} 
\keywords{Data Structures, Dynamic Data Structures, Nearest Neighbor Queries} 
\newif\ifeurocg
\newcommand{\ov}[1]{\overline{#1}}
\newcommand{\ovar}[1]{\overrightarrow{#1}}
\newcommand{\Cell}{\mathit{Cell}}
\newcommand{\Cells}{\mathit{Cells}}
\newcommand{\Vor}{\mathrm{Vor}}
\newcommand{\Sample}{\mathit{Sample}}
\newcommand{\Piece}{\mathit{Piece}}
\newcommand{\Pieces}{\mathit{Pieces}}
\newcommand{\Sep}{\mathit{Sep}}
\newcommand{\Seps}{\mathit{Seps}}
\newcommand{\PSep}{\mathit{PSep}}
\newcommand{\Hull}{\mathit{Hull}}
\newcommand{\NN}{\mathit{NN}}
\newcommand{\old}{\mathit{old}}
\newcommand{\new}{\mathit{new}}
\newcommand{\nearest}{\mathit{nearest}}
\newcommand{\cur}{\mathtt{cur}}
\newcommand{\init}{\mathtt{init}}
\newcommand{\mcS}{\mathcal{S}}
\newcommand{\mcT}{\mathcal{T}}
\newcommand{\oT}{\overline{T}}
\newcommand{\Voron}{\mathit{Vor}}
\newcommand{\eps}{\varepsilon}
\DeclareMathOperator*{\argmin}{arg\,min}
\title{Incremental Planar Nearest Neighbor Queries with  Optimal Query Time} 
\titlerunning{Incremental Planar Nearest Neighbor Queries with  Optimal Query Time} 
\authorrunning{J. Iacono and Y. Nekrich} 
\begin{document}

\ifeurocg
\author[1]{John Iacono}
\affil[1]{Université libre de Bruxelles, Belgium}

\author[2]{Yakov Nekrich}
\affil[2]{Michigan Technological University, USA}

\ArticleNo{100}
\fi

\maketitle

\begin{abstract}
  In this paper we show that two-dimensional nearest neighbor queries can be answered in optimal $O(\log n)$ time while supporting insertions in $O(\log^{1+\eps}n)$ time.
No previous data structure was known that supports $O(\log n)$-time queries and polylog-time  insertions. 
In order to achieve logarithmic queries our data structure uses a new technique related to fractional cascading that leverages the inherent geometry of this problem. Our method can be also  used in other semi-dynamic scenarios. 
\end{abstract}

\section{Introduction}
\label{sec:intro}
In the nearest neighbor problem a set of points $S$ is stored in a data structure so that for a query point $q$ the point $p\in S$ that is closest to $q$ can be found efficiently. The nearest neighbor problem and its variants are among the most fundamental and extensively studied problems in computational geometry; we refer to e.g.~\cite{survey1} for a survey. In this paper we study dynamic data structures for the Euclidean nearest neighbor problem in two dimensions. We show that 
the optimal $O(\log n)$ query time for this problem can be achieved while allowing insertions in time $O(\log^{1+\epsilon})$.

\paragraph*{Previous Work.}
See Table~\ref{tab:results}.
In the static scenario the planar nearest neighbor  problem can be solved in $O(\log n)$ time by point location in  Voronoi diagrams.  However the dynamic variant of this problem is significantly harder because Voronoi diagrams cannot be dynamized efficiently:  it was shown by Allen et al.\ \cite{AllenBIL17} that a sequence of insertions can lead to $\Omega(\sqrt{n})$ amortized combinatorial changes per insertion in the Voronoi diagram. A static nearest-neighbor data structure can be easily transformed into an insertion-only data structure using the logarithmic method of Bentley and Saxe~\cite{BentleyS80} at the cost of increasing the query time to $O(\log^2 n)$. Several researchers~\cite{DevillersMT92,ClarksonMS93,Mulmuley94book} studied the dynamic nearest neighbor problem in the situation when the sequence of updates is random in some sense (e.g. the deletion of any element in the data structure is equally likely). However their results cannot be extended to the case when the complexity of a specific sequence of updates must be analyzed.

Using a lifting transformation~\cite{BergCKO08}, 2-d nearest neighbor queries can be reduced to extreme point queries on a 3-d convex hulls. Hence data structures for the dynamic convex hull in 3-d can be used to answer 2-d nearest neighbor queries. The first such  data structure (without assumptions about the update sequence) was presented by Agarwal and Matou\v{s}ek~\cite{AgarwalM95}. Their data structure supports queries in $O(\log n)$ time and updates in $O(n^{\eps})$ time; another variant of their data structure supports queries in $O(n^{\eps})$ time and updates in $O(\log n)$ time. A major improvement was achieved in a seminal paper by Chan~\cite{Chan10}. The data structure in~\cite{Chan10} supports queries in $O(\log^2 n)$ time, insertions in $O(\log^3 n)$ expected time and deletions in $O(\log^6n)$ expected time. The update procedure can be made deterministic using the result of Chan and Tsakalidis~\cite{ChanT16}.  The deletion time was further reduced to $O(\log^5n)$~\cite{KaplanMRSS20} and to  $O(\log^4 n)$~\cite{Chan20a}. 
This sequence of papers makes use of shallow cuttings, a general powerful technique, but, alas, all uses of it for the point location problem in 2-d have resulted in $O(\log^2 n)$ query times.

Even in the case of insertion-only scenario, the direct application of the 45-year-old classic technique of Bentley and Saxe~\cite{BentleyS80} remains the best insertion-only method with polylogarithmic update before this work;
no data structure with
$o(\log^2 n)$ query time and polylogarithmic update time was described previously.

\begin{table}[]
    \centering
    \begin{tabular}{c|ccc}
         & \bf Query & \bf Insert & \bf Delete\\ \hline
Bentley and Saxe 1980 ~\cite{BentleyS80} & $O(\log^2 n)$ & $O(\log^2 n)$ & Not supported \\
Agarwal and Matou\v{s}ek  1995~\cite{AgarwalM95} &$O(\log n)$ 
& $O(n^{\eps})$ & $O(n^{\eps})$ \\
\textquotesingle \textquotesingle &$O(n^{\eps})$ 
& $O(\log n)$ & $O(\log n)$ \\
Chan 2010~\cite{Chan10}  & $O(\log^2 n)$  & $O(\log^3 n)$ \dag & $O(\log^6 n)$ \dag \\
Chan and Tsakalidis 2016~\cite{ChanT16}  & $O(\log^2 n)$ & $O(\log^3 n)$& $O(\log^6 n)$\\
Kaplan et al. 2020~\cite{KaplanMRSS20}  & 
$O(\log^2 n)$ & $O(\log^3 n)$& $O(\log^5 n)$\\
Chan 2020~\cite{Chan20a} & $O(\log^2 n)$ & $O(\log^3 n)$& $O(\log^4 n)$\\
Here & $O(\log n)$ & $O(\log^{1+\eps}n)$ & Not supported\\
\end{tabular}
    \caption{Known results. Insertion and deletion times are amortized, \dag\ denotes in expectation.}
    \label{tab:results}
\end{table}

\paragraph*{Our Results.}
We demonstrate that optimal $O(\log n)$ query time and poly-logarithmic update time can be achieved in some dynamic settings. The following scenarios are considered in this paper:
\begin{enumerate}
\item  We describe a \emph{semi-dynamic insertion-only} data structure that uses $O(n)$ space, supports insertions in $O(\log^{1+\eps}n)$ amortized time and answers queries in $O(\log n)$ time.
\item 
  In the \emph{semi-online} scenario, introduced by Dobkin and Suri~\cite{DobkinS91}, we know the deletion time of a point $p$ when a  point $p$ is inserted, i.e., we know how long a point will remain in a data structure at its insertion time.
  We describe a semi-online fully-dynamic data structure that answers queries in $O(\log n)$ time and supports updates in $O(\log^{1+\eps} n)$ amortized time. The same result is also valid in the \emph{offline} scenario when the entire sequence of updates is known in advance.
\item
  In the \emph{offline partially persistent scenario}, the sequence of updates is known and every update creates a new version of the data structure.  Queries can be asked to any version of the data structure. We describe an offline partially persistent data structure that uses $O(n\log^{1+\eps}n)$ space, can be constructed in $O(n\log^{1+\eps}n)$ time and answers queries in $O(\log n)$ time.    
\end{enumerate}

All three problems considered in this paper can be reduced to answering point location queries in (static) Voronoi diagrams of $O(\log n)$ different point sets.
For example, we can obtain an insertion-only data structure by using the logarithmic method of Bentley and Saxe~\cite{BentleyS80}, which we now briefly describe.
The input set $S$  is partitioned into a logarithmic number of subsets $S_1$, $\ldots$, $S_f$ of exponentially increasing sizes.
In order to find the nearest neighbor of some query point $q$ we locate $q$ in the Voronoi diagram of each set $S_i$ and report the point closest to $q$ among these nearest neighbors.
Since each  point location query takes $O(\log n)$ time, answering a logarithmic number of queries takes  $O(\log^2 n)$ time.   

The fractional cascading technique~\cite{ChazelleG86} applied to this problem in one dimension decreases the query cost to logarithmic by sampling elements of each $S_i$ and storing copies of the  sampled elements in other sets $S_j$, $j<i$. Unfortunately, it was shown by Chazelle and Liu~\cite{ChazelleL04} that fractional cascading does not work well for two-dimensional non-orthogonal problems, such as point location:  in order to answer $O(\log n)$ point location queries in $O(\log n)$ time, we would need $\tilde{\Omega}(n^2)$ space, even in the static scenario.

To summarize, the two obvious approaches to the insertion-only problem are to maintain a single search structure and update it with each insertion, the second is to maintain a collection of static Voronoi diagrams of exponentially-increasing size and to execute nearest neighbor queries by finding the closest point in all structures, perhaps aided by some kind of fractional cascading. The first approach cannot obtain polylogarithmic insertion time due to the lower bound on the complexity change in Voronoi diagrams caused by insertions~\cite{AllenBIL17}, and the second approach cannot obtain $O(\log n)$ search time due to Chazelle and Liu's lower bound~\cite{ChazelleL04}.
Our main intellectual contribution is showing that the lower bound of Chazelle and Liu~\cite{ChazelleL04} can be circumvented for  the case of point location in Voronoi diagrams. Specifically, a strict fractional cascading approach requires finding the closest point to a query point in each of the subsets $S_i$; we loosen this requirement: in each $S_i$, we either find the closest point or provide a certificate that the closest point in $S_i$ is not the closest point in $S$. This new, powerful and more flexible form of fractional cascading is done by  using a number of novel observations about the geometry of the problem.
 We imagine our general technique may be applicable to speeding up search in other dynamic search problems.
Our method employs planar separators to sample point sets and uses properties of Voronoi diagrams to speed up queries. We explain our method and show how it can be applied to the insertion-only nearest neighbor problem in Section~\ref{sec:basic-ins}. 
\ifarxiv
A further modification of our method that improves the insertion time and the space usage is described in Section~\ref{sec:ins-faster}. We describe a partially persistent data structure in Section~\ref{sec:persist}.  A semi-online data structure is described in Section~\ref{sec:offline}. 
\fi
\ifsocg
A further modification of our method, that will be described in the full version of this paper,   improves the insertion time and the insertion time  to $O(\log^{1+\eps}n)$ and space usage to $O(n)$. The description of partially persistent and semi-online data structures is also deferred to the full version. 
\fi

\section{Basic Insertion-Only Structure}
\label{sec:basic-ins}

We present our basic insertion only structure in several parts. In the first part, the overview (\S \ref{sec:framework}), we present the structure where one needed function, jump, is presented as a black box. With the jump function abstracted, our structure is a combination of known techniques, notably
the logarithmic method of Bentley and Saxe~\cite{BentleyS80} and sampling. In \S\ref{sec:jump} we present the implementation of the jump function and the needed geometric preliminaries. In contrast to combination of standard techniques presented in the overview which require little use of geometry, our implementation of the jump function is novel and requires thorough geometric arguments.
We then fully describe the underlying data structures needed in \S~\ref{sec:aux-datastr}. Note that all arguments presented here are done with the goal of obtaining $O(\log n)$ queries and polylogarithmic insertion. We opt here for clarity of presentation versus reducing the number of logarithmic factors in the insertion, as the arguments are already complex. 

\subsection{Overview} \label{sec:framework}

\ifarxiv
All notation is summarized in Table~\ref{table-of-notation} which can be found on the last page, page~\pageref{table-of-notation}.
We let $S$ denote the set of points currently stored in the structure, and use $n$ to denote $|S|$.
In this section we set a constant $d$ to $2$. Even though for this section $d$ is fixed, we will express everything as a function of $d$ as the techniques used in Appendix~\ref{sec:ins-faster} will use non-constant $d$.
\fi
\ifsocg
We let $S$ denote the set of points currently stored in the structure, and use $n$ to denote $|S|$. In this section we set a constant $d$ to $2$. Even though for this section $d$ is fixed, we will express everything as a function of $d$.
\fi

Let $\mcS=\{ S_1, S_2, \ldots S_{f} \}$ denote a partition of $S$ into  sets of exponentially-increasing  size where $f \coloneqq |\mcS|=\Theta(\log_d n)$ and $|S_i|=\Theta(d^i)$. Note that the partition of $\mcS$ into $S_i$ is not unique.
Let $\NN(P,q)$ be the nearest neighbor of $q$ in a point set $P$, which we assume to be unique. Given a point $q$, the computation of $\NN(S,q)$ is the query that our data structure will support.

We now define a sequence of point sets $T_1,\ldots T_{f}$. The intuition is that, as in classical fractional cascading~\cite{ChazelleG86}, the set $T_i$ contains all elements of 
$S_i$ and a sample of elements from the sets  $T_j$ where $j>i$; this implies the last sets are equal: $T_{f}=S_{f}$.
This sampling will be provided by the function $\Sample_j(k)$ which returns a subset of $T_j$ of size $O(|T_j|/d^{2k})$; while it will have other important properties, for now only the size matters. 

We now can formally define $T_i$:

$$ T_i \coloneqq S_i \cup \bigcup_{j=i+1}^{f} \Sample_j(j-i) $$
\ifarxiv
From this definition we have several observations which we group into a lemma, the proof can be found in Appendix~\ref{sec:lemma1}:
\fi
\ifsocg
From this definition we have several observations which we group into a lemma, the proof can be found in the full version of this paper:
\fi
\begin{lemma}{\bf Facts about $T_i$}  \label{lem:facts}
	\begin{enumerate}
 \setlength{\itemsep}{1pt}
\setlength{\parskip}{0pt}
\setlength{\parsep}{0pt} 
    \item $T_{f}=S_{f}$
    \item $T_i$ is a function of the $S_j$, for $j \geq i$.
    \item $S=\cup_{i=1}^{f} T_i$
    \item $\NN(S,q) \in \bigcup_{i=1}^f \{ \NN(T_i,q)\} $
    \item $|T_i| = \Theta(d^i)$
    \item For any $i$
	$ \sum_{j=i+1}^f |\Sample_j(j-i)|=\Theta(|T_i|)$
\end{enumerate}
\end{lemma}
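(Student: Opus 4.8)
Items~1--4 are purely combinatorial consequences of the definition of $T_i$ together with the exponential size bounds on the $S_i$; items~5--6 reduce to summing a geometric series. I would organize everything around a downward induction on $i$, from $i=f$ down to $i=1$. Item~1 is immediate: in the definition of $T_f$ the union $\bigcup_{j=f+1}^{f}\Sample_j(j-i)$ has empty index set, so $T_f=S_f$. For item~2 I would induct downward with base case $i=f$: $T_i$ is built from $S_i$ together with the sets $\Sample_j(j-i)$ for $j>i$, each of which is by definition a function of $T_j$ alone, and $T_j$ is a function of the $S_k$ with $k\ge j>i$ by the induction hypothesis. The same induction also yields the stronger containment $T_i\subseteq S_i\cup S_{i+1}\cup\cdots\cup S_f$, which I would record for later use.

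For item~3: $S_i\subseteq T_i$ for every $i$, so $\bigcup_i T_i\supseteq\bigcup_i S_i=S$, and the reverse inclusion is the containment just noted. For item~4: let $p=\NN(S,q)$; by item~3, $p\in T_i$ for some $i$, and since $p$ minimizes distance to $q$ over all of $S\supseteq T_i$ it also minimizes distance to $q$ over $T_i$, so $p=\NN(T_i,q)$ by uniqueness of the nearest neighbor. For items~5 and~6, write $|S_i|\le a\,d^i$ and $|\Sample_j(k)|\le b\,|T_j|/d^{2k}$ for constants $a,b$. I would prove $|T_i|\le c\,d^i$ for a suitable constant $c$ by downward induction: the base case $i=f$ holds since $|T_f|=|S_f|$, and for the inductive step, assuming $|T_j|\le c\,d^j$ for $j>i$,
\[
  |T_i|\ \le\ |S_i|+\sum_{j=i+1}^{f}\frac{b\,|T_j|}{d^{2(j-i)}}\ \le\ a\,d^i+bc\sum_{j=i+1}^{f}\frac{d^{j}}{d^{2(j-i)}}\ \le\ a\,d^i+\frac{bc}{d-1}\,d^i\ \le\ c\,d^i
\]
for $c$ chosen large enough, which is possible since $\tfrac{b}{d-1}<1$. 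Together with the trivial lower bound $|T_i|\ge|S_i|=\Omega(d^i)$ this gives item~5. Item~6 then follows: its upper bound $\sum_{j=i+1}^{f}|\Sample_j(j-i)|=O(d^i)=O(|T_i|)$ is exactly the geometric sum above, and its lower bound $\Omega(|T_i|)$ comes from the single term $j=i+1$, since $|\Sample_{i+1}(1)|$ is within a constant factor of $|T_{i+1}|/d^2=\Theta(d^{i-1})=\Theta(d^i)$ (here $d$ is constant); this last point needs the matching lower bound on sample sizes supplied by the construction of $\Sample$.

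The only delicate point is the constant bookkeeping in the induction for item~5: one must check that a single global $c$ suffices, i.e.\ that $c$ need not grow as $i$ decreases. This is precisely what the quadratic exponent $d^{2k}$ (rather than $d^{k}$) in the sample sizes buys: it forces the total mass injected into $T_i$ by the samples from all higher levels to be only a $\Theta(1/d)$-fraction of $d^i$, strictly less than $d^i$, so the recursion $c\ge a+\tfrac{b}{d-1}c$ has a solution. Everything else is routine manipulation of the definition.
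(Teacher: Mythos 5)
Your proof is correct and follows essentially the same route as the paper's: items 1--4 are unpacked directly from the definition, and items 5--6 are handled via the recurrence $|T_i|\le cd^i+\sum_{j>i}|T_j|/d^{2(j-i)}$ and the ensuing geometric series. You supply more detail than the paper (which merely states the recurrence and asks the reader to verify it), notably the downward induction with the explicit closure condition $b/(d-1)<1$, but the decomposition and all the key computations coincide.
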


\paragraph*{A note on notation.}
We assume the partition of $S$ into the sets $\mcS=\{S_1, S_2, \ldots , S_{f}\}$. Any further notation that includes a subscript, such as $T_i$, is a function of the $S_j$, for $j \geq i$. This compact notation makes explicit the dependence of $\mathit{anything}_i$ only on $S_j$, for $j\geq i$. This dependence in one direction only (i.e., on non-smaller indexes only) is crucial to the straightforward application of the standard Bentley and Saxe~\cite{BentleyS80} rebuilding technique in the implementation of the data structure and the insertion operation described in section \S\ref{sec:aux-datastr}-\ref{sec:insert}.

\paragraph*{Voronoi and Delaunay.}
Let $\Voron(P)$ be the Voronoi diagram of point set $P$, let $\Cell(P,p)$ be the cell of a point $p$ in $\Voron(P)$, that is the locus of points in the plane whose closest element in $P$ is $p$. Thus $q\in \Cell(P,p)$ is equivalent to $\NN(P,q)=p$.
Let $|\Cell(P,p)|$ be the complexity of the cell, that is, the number of edges on its boundary.
Let $G(P)$ refer to the Delaunay graph of $P$, the dual graph of the Voronoi diagram of $P$; the degree of $p$ in $G(P)$ is thus $|\Cell(P,p)|$ and each point in $P$ corresponds to a unique vertex in $G(P)$.  Delaunay graphs are planar.  To simplify the description, we will not distinguish between points in a planar set $P$ and vertices of $G(P)$. For example, we will sometimes say that a point $p$ has degree $x$ or say that a point $p'$ is a neighbor of $p$. We will find it useful to have a compact notation for expressing the union of Voronoi cells; thus for a set of points $P' \subseteq P$, let $\Cells(P,P')$ denote $ \bigcup_{p \in P'} \Cell(P,p)$.

\paragraph*{Pieces and Fringes.}
Given a graph, $G=(V,E)$, and a set of vertices $V' \subseteq V$, the \emph{fringe} of $V'$ (with respect to $G$) is the subset of $V'$ incident to edges whose other endpoint is in $V \setminus V'$.
Let $G=(V,E)$ be a planar graph. For any $r$, Frederickson~\cite{Frederickson87} showed the vertices of $G$ can be decomposed\footnote{We use the word \emph{decomposed} to mean a division of a set into into a collection sets, the \emph{decomposition}, whose union is the original set, but, unlike with a partition, elements may belong to multiple sets.} into $\Theta(|V|/r)$ \emph{pieces}, so that:  (i) Each vertex is in at least one piece. (ii) Each piece has at most $r$ vertices in total and  only  $O(\sqrt{r})$ vertices on its fringe. (iii) If a vertex is a non-fringe vertex of a piece (with respect to $G$), then it is not in any other pieces. (iv) The total size of all pieces is in $\Theta(|V|)$.   Intuitively, the pieces are almost a partition of $V$ where those vertices on the fringe of each piece may appear in multiple pieces.
Such a decomposition of $G$ can be computed in time $O(|V|)$~\cite{Goodrich95,KleinMS13}. 
We will apply this decomposition to  $T_i$, which is both a point set and the vertex set of $G(T_i)$, for exponentially increasing sizes of $r$.

Given integers $1 \leq k<j<f$, let 
\[
    \Pieces_j(k) \coloneqq \{\Piece_j^1(k), \Piece^2_j(k), \ldots  \Piece^{|\Pieces_j(k)|}_j(k)\}
\]
be a decomposition of $T_j$ into $r=\Theta(|T_j|/d^{4k})$ subsets such that each subset $\Piece^l_j(k)$  has size $O(d^{4k})$
and a fringe of size $O({d^{2k}})$ with respect to $G(T_i)$. We let \[ \Seps_j(k) \coloneqq \{\Sep_j^1(k), \Sep^2_j(k), \ldots  \Sep^{|\Pieces_j(k)|}_j(k)\}\] be defined so that 
$\Sep_j^\ell(k)$ denotes the fringe of $\Piece_j^\ell(k)$ , and let $\overline{Sep}_j^\ell(k)$ be $\Piece_j^\ell(k) \setminus \Sep_j^\ell(k)$. 
Thus each $\Piece_j^\ell(k)$  is partitioned into its fringe vertices, $\Sep_j^\ell(k)$, and its interior non-fringe  vertices $\overline{\Sep}_j^\ell(k)$; note that $\overline{\Sep}_j^\ell(k)$ may be empty if all elements of $\Piece_j^\ell(k)$ are on the fringe.

Finally, we define $\Sample_j(k)$ to be the union of all the fringe vertices: 
\[\Sample_j(k) \coloneqq \bigcup_{Sep \in \Seps_j(k)} Sep \]
thus $\Seps_j(k)$ is a partition decomposition of $\Sample_j(k)$.

For any $k\in [1.. j-1]$, the decomposition of $T_j$ into $\Pieces_j(k)$,
the partition  of each $\Piece_j^\ell(k)$ into 
$\Sep_j^\ell(k)$ and $\overline{Sep}_j^\ell(k)$, 
and the set $\Seps_j(k)$  can all be computed in time $O(|T_j|)$ using \cite{KleinMS13} if the Delaunay triangulation is available; if not it can be computed in time $O(|T_j| \log |T_j|)$. Thus computing these for all valid $i$ takes time and space $O(|T_j| \log |T_j| \log_d |T_j|)$ as $k < j=O(\log_d |T_j|)$. 
\begin{figure}[t]
    \centering
    \adjincludegraphics[width=.7\textwidth,trim={0 {.4\height} {.01\width} {.01\height}},clip,scale=.4]{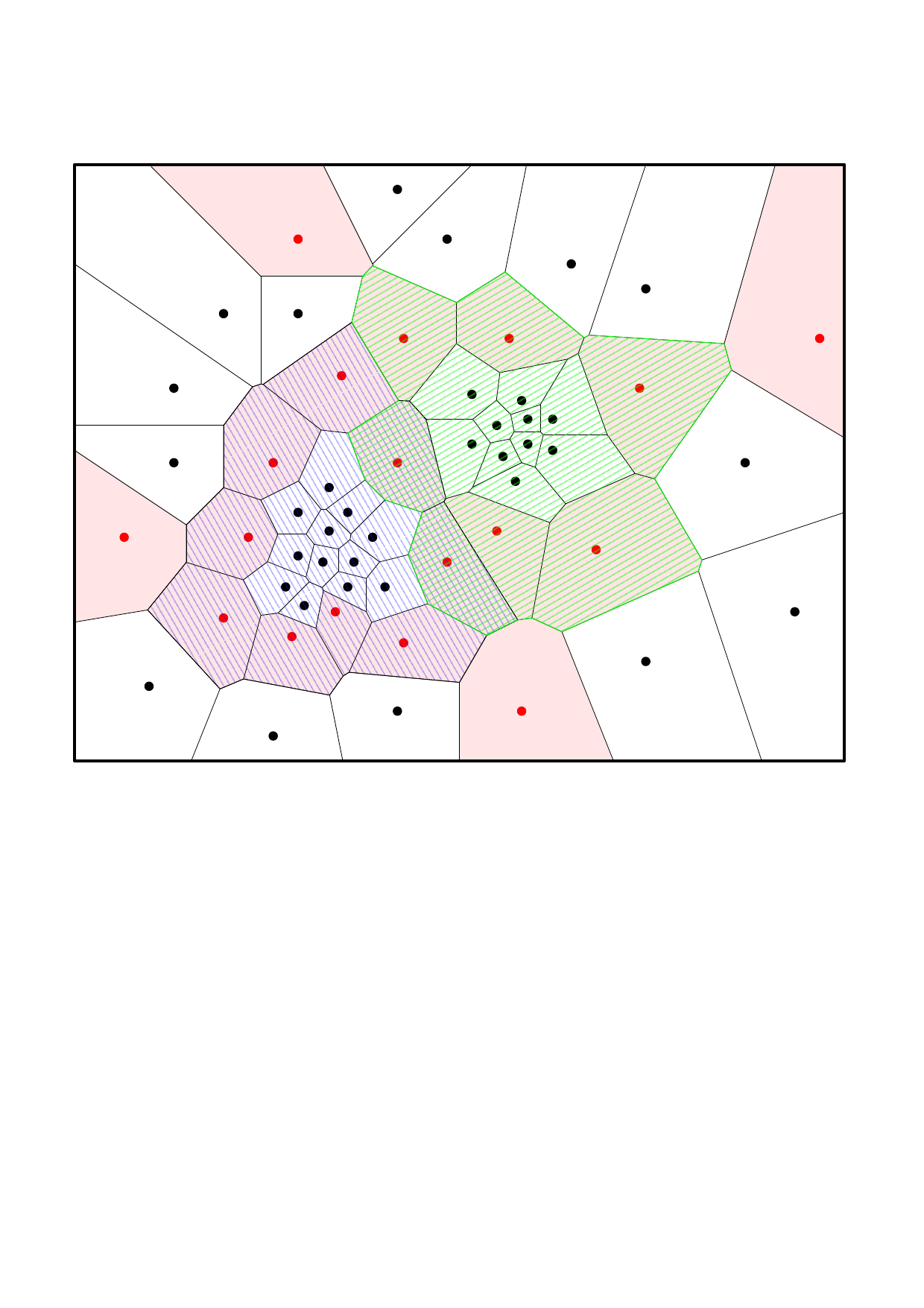}
    \caption{Part of a Voronoi diagram for a point set $T_j$.
    Two elements of $\Pieces_j(k)$ have been highlighted, one in striped blue, call it $\Piece_j^1(k)$, and one in striped green, call it $\Piece_j^2(k)$.
    For each piece, the cells of fringe vertices are shaded red. Thus, the set $\Sample_j(k)$ are the red verticies, and the region $\Cells(T_j,\Sample_j(k))$ is shaded red.
    The green-and-red shaded region is $\Cells(T_j,\Sep_j^2(k))$ and the green-but-not-red shaded region is $\Cells(T_j,\overline{\Sep}_j^2(k))$}.
    \label{fig:separ}
\end{figure}

\begin{figure}[t]
    \centering
    \adjincludegraphics[width=.5\textwidth,trim={{0.05\width} 
    {.3\height} 
    {.2\width} {.1\height}},clip,scale=.4]{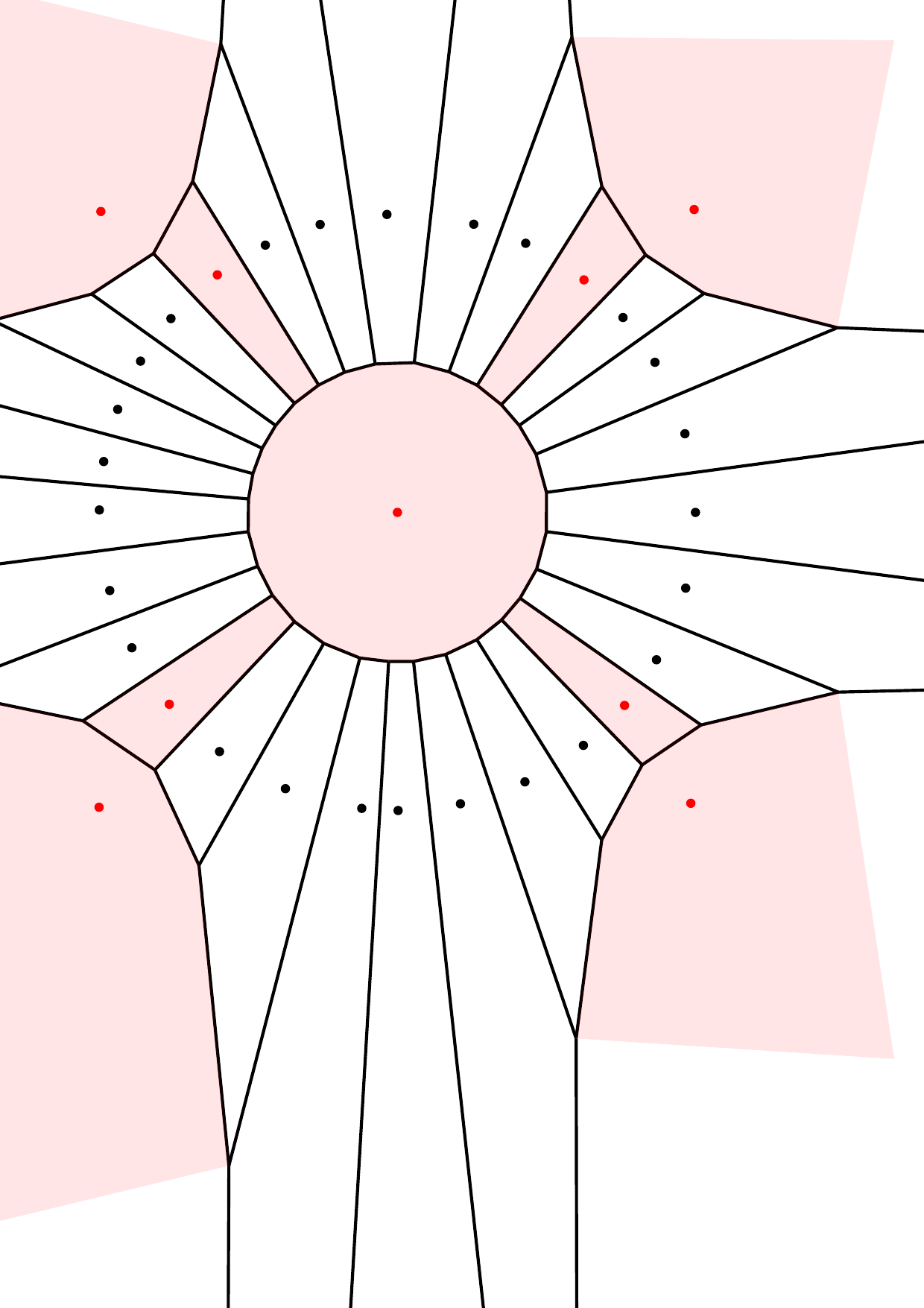}\caption{
    High complexity cells can occur in Voronoi diagrams. Such cells must be included in fringe verticies $\Sample_j(k)$, illustrated in red for some point set $T_j$. This results in the complexity of the boundary of the interior sets $\Cells(T_j,\overline{\Sep}\_j^\ell(k))$, the connected components of white Voronoi cells, are of complexity $O(d^{4(j-i)})$ by Lemma~\ref{l:inner}.}
    \label{fig:inner}
\end{figure}

One property of this sampling technique is that points in $T_j$ with Voronoi cells in $\Voron(T_j)$ of complexity at least $k$ are included in $T_i$ if $j>i$ and $j-i=O(\log_d k)$. By complexity of a region, we mean the number of edges that bound this region.



\begin{lemma}
Given $i<j$, if $p \not \in T_i$ and $p \in T_j$ then the complexity of $\Cell(T_j,p)$ is $O(d^{4(j-i)})$. 
\end{lemma}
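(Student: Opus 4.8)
The plan is to trace through the definitions to see exactly when a point $p \in T_j$ fails to be in $T_i$, and then use the size bounds on fringes (property (ii) of the Frederickson decomposition) together with the fact that $|\Cell(T_j,p)|$ equals the degree of $p$ in the \emph{planar} graph $G(T_j)$. First I would observe that, by the definition $T_i = S_i \cup \bigcup_{m=i+1}^{f} \Sample_m(m-i)$, every point of $T_j$ that lies in $\Sample_j(j-i)$ is included in $T_i$. Hence if $p \in T_j \setminus T_i$, then in particular $p \notin \Sample_j(j-i) = \bigcup_{\ell} \Sep_j^\ell(j-i)$, i.e. $p$ is a non-fringe (interior) vertex of every piece $\Piece_j^\ell(j-i)$ that contains it; by property (iii) of the decomposition, $p$ then belongs to exactly one such piece, say $\Piece_j^\ell(j-i)$, and lies in $\overline{\Sep}_j^\ell(j-i)$.

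Next I would bound the degree of $p$ in $G(T_j)$. The piece $\Piece_j^\ell(j-i)$ has at most $r = \Theta(d^{4(j-i)})$ vertices in total (property (ii)). Since $p$ is a non-fringe vertex of this piece, property (iii) of Frederickson's decomposition guarantees that \emph{all} neighbors of $p$ in $G(T_j)$ lie inside this same piece: if $p$ had a neighbor $p'$ outside the piece, then $p$ would be incident to an edge leaving the piece, making $p$ a fringe vertex — contradiction. Therefore $\deg_{G(T_j)}(p) \le |\Piece_j^\ell(j-i)| - 1 = O(d^{4(j-i)})$. Finally, since $G(T_j)$ is the Delaunay graph of $T_j$ (the dual of $\Voron(T_j)$), the degree of $p$ in $G(T_j)$ equals the complexity $|\Cell(T_j,p)|$, which gives the claimed bound $O(d^{4(j-i)})$.

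I do not expect a serious obstacle here; the statement is essentially bookkeeping over the definitions. The one subtlety worth stating carefully is the implication ``$p$ non-fringe in its piece $\Rightarrow$ every Delaunay-neighbor of $p$ lies in that same piece,'' which is really the definition of the fringe (the fringe of $V'$ is precisely the set of vertices of $V'$ with a neighbor outside $V'$) rather than an extra property; so I would phrase it as an immediate consequence of the fringe definition applied to $\Piece_j^\ell(j-i)$ as a vertex subset of $G(T_j)$. A second minor point: one must confirm that $k = j-i$ is a legal argument to $\Pieces_j(\cdot)$, i.e. $1 \le j-i \le j-1$, which holds precisely because $1 \le i < j$. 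With those two remarks in place the proof is a short chain of inclusions and the planarity/duality identity.
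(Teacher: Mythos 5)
Your proof is correct and uses the same ingredients as the paper's: the inclusion $\Sample_j(j-i)\subseteq T_i$, Frederickson's piece-size bound, the fringe definition, and the degree-equals-cell-complexity duality. The only difference is that the paper argues by contrapositive (assume $|\Cell(T_j,p)|>cd^{4(j-i)}$, conclude $p$ has a neighbor outside its piece, hence $p\in\Sep_j^\ell(j-i)\subseteq T_i$), whereas you argue directly from $p\notin T_i$; this is a matter of presentation, not a different route.
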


\begin{proof}
 Suppose that $|\Cell(T_j,p)| > c d^{4(j-i)}$, for a constant $c$ chosen later, we will show this implies $p\in T_i$. Thus the degree of $p$ in $G(T_j)$ is greater than $c d^{4(j-i)}$. Consider the piece $\Piece^\ell_j(j-i)$ in $\Pieces_j(j-i)$ that contains $p$. This piece has size at most $O(d^{4(j-i)})$, which is at most $c d^{4(j-i)}$ for some $c$ (here we choose $c$).
 Thus in $G(T_j)$, $p$ must have neighbors which are not in $\Piece^\ell_j(j-i)$. By definition, $p$ is thus in the fringe $Sep^\ell_j(j-i)$, which implies $p \in \Sep^\ell_j(j-i)$. From the definition of $T_i$, $T_i \subset \Sep^\ell_j(j-i)$ and which gives $p \in T_i$. 
 \end{proof}

While we cannot bound the complexity of any Voronoi cell in a fringe, we can bound the complexity of $\overline{Sep}_j^\ell(j-i)$, the cells inside a fringe.
Intuitively, each piece has the fringe cells on its exterior and non-fringe cells on its interior; imagining the fringe cells of a piece as an annulus gives two boundaries, the exterior boundary of the fringes, which is the boundary between the cells of this and other pieces, and the interior boundary of the fringes, which is the boundary between the fringe cells and the interior cells in this piece. Crucially, while the exterior boundary could have high complexity, the interior boundary does not, which we now formalize:
\begin{lemma}\label{l:inner}
    The complexity of $\Cells(T_j,\overline{Sep}_j^\ell(j-i))$ is $O(d^{4(j-i)})$. 
\end{lemma}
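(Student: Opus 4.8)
The plan is to bound the complexity of $R\coloneqq\Cells(T_j,\overline{Sep}_j^\ell(j-i))$ by the total complexity of the Voronoi cells that constitute it, and then to use the defining property of non-fringe vertices to keep that total small. Write $k\coloneqq j-i$, let $P'\coloneqq\Piece_j^\ell(k)$, and let $W\coloneqq\overline{Sep}_j^\ell(k)=P'\setminus\Sep_j^\ell(k)$, so that $R=\bigcup_{p\in W}\Cell(T_j,p)$. Every edge on the boundary of $R$ is an edge of $\Cell(T_j,p)$ for some $p\in W$, hence the complexity of $R$ is at most $\sum_{p\in W}|\Cell(T_j,p)|$. Since $|\Cell(T_j,p)|$ equals the degree of $p$ in $G(T_j)$, the complexity of $R$ is at most $\sum_{p\in W}\deg_{G(T_j)}(p)$.

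The key step is to observe that for each $p\in W$ every neighbor of $p$ in $G(T_j)$ lies in $P'$. Indeed, $p$ is a non-fringe vertex of the piece $P'$, so by the definition of the fringe $p$ is incident in $G(T_j)$ only to edges whose other endpoint is also in $P'$; equivalently, $\deg_{G(T_j)}(p)$ equals the degree of $p$ in the subgraph $H$ of $G(T_j)$ induced by $P'$. Therefore the complexity of $R$ is at most $\sum_{p\in W}\deg_H(p)\le\sum_{p\in P'}\deg_H(p)=2|E(H)|$. Now $H$ is a subgraph of the Delaunay graph $G(T_j)$, which is planar, so $H$ is a planar graph on $|P'|=O(d^{4k})$ vertices and hence has $|E(H)|=O(|P'|)=O(d^{4(j-i)})$ edges, which yields the claimed bound.

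The only point that needs care is the first inequality: one must check that no Voronoi edge bounding $R$ escapes the cell-complexity count, which is exactly what the non-fringe property guarantees, since it forbids an interior cell of $W$ from sharing an edge with a cell outside $P'$ and thus forces every Voronoi edge of $\Voron(T_j)$ to lie either entirely on $\partial R$ or entirely off it. One also fixes a general-position convention so that the identity $|\Cell(T_j,p)|=\deg_{G(T_j)}(p)$ used above holds (as already adopted in the excerpt); whether a cell is bounded or unbounded affects this only by an additive constant and does not change the asymptotics, and the bound is unaffected by $R$ being disconnected since we simply sum over all $p\in W$. I expect this bookkeeping to be the only mild obstacle; the substantive content is the single observation that non-fringe vertices of a piece have all their Delaunay neighbors inside the piece, after which planarity does the rest.
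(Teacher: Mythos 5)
Your proof is correct and follows essentially the same route as the paper's: the key step in both is that non-fringe vertices of a piece have all their Delaunay neighbors inside the piece, so the relevant portion of the Delaunay graph is the induced (planar) subgraph on $O(d^{4(j-i)})$ vertices, whose edge count is $O(d^{4(j-i)})$. Your degree-summing write-up makes the paper's more terse "adjacency graph of these Voronoi regions is planar" argument explicit; the one small wrinkle is that your closing caveat about the first inequality is not actually needed — that $\partial R$ consists only of edges of cells $\Cell(T_j,p)$ with $p\in W$ is immediate from $R$ being the union of those cells, independent of the non-fringe property (which is what you correctly use to control $\deg_{G(T_j)}(p)$).
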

\begin{proof}
See Figure~\ref{fig:inner}.
    Each cell in $\Cells(T_j,\overline{Sep}_j^\ell(j-i))$ is adjacent to either other cells in $\Cells(T_j,\overline{Sep}_j^\ell(j-i))$ or cells in $\Cells(T_j,{Sep}_j^\ell(j-i))$. The adjacency graph of these Voronoi regions is planar, and as $\overline{Sep}_j^\ell(j-i) \cup {Sep}_j^\ell(j-i)= {Piece}_j^\ell(j-i)$, and recalling that $|{Piece}_j^\ell(j-i)|=O(d^{4(j-i)})$ gives the lemma. 
\end{proof}

\paragraph*{The Jump function: definition} 


At the core of our nearest neighbor algorithm is the function $Jump$, defined as follows. 
We will find it helpful to use $\NN_R(q)$ for a range $R=[l,r]$ to denote $\NN(\cup_{i\in [l,r]}T_i,q)$; for example $\NN_{[1,k]}(q)$ is the nearest neighbor of  $q$ in $T_1,T_2,\ldots T_k$. 

Intuitively, a call to $Jump(i,j,q,p_i,e_i)$ is used when trying to find the nearest neighbor of $q$, and assuming we know the nearest neighbor of $q$ in $T_1, T_2 \ldots T_{(i+j)/2}$ seeks to provide information on whether there are any points that could be the nearest neighbor of $q$ in $T_{(i+j)/2+1} \ldots T_j$. This information could be either a simple \emph{no}, or it could provide the nearest neighbor of $q$ for some prefix of these sets. Additionally, the edge of an the Voronoi cell of the currently known nearest neighbor in the direction of the query point is always passed and returned to aide the search using the combinatorial bounds from Lemma~\ref{lem:hullbar}, point~\ref{p4}.
 
\newcommand{\jumpdef}{
\begin{itemize}
\setlength{\itemsep}{1pt}
\setlength{\parskip}{0pt}
\setlength{\parsep}{0pt} 
    \item \textbf{Input to $Jump(i,j,q,p_i,e_i)$}:
        \begin{itemize}
        \setlength{\itemsep}{1pt}
     \setlength{\parskip}{0pt}
    \setlength{\parsep}{0pt} 
        \item Integers $i$ and $j$, where $j-i$ is required to be a power of 2. We use $m$ to refer to $(j+i)/2$, the midpoint.  
        \item Query point $q$.
        \item Point $p_i$ where $p_i = \NN(T_i,q)$.
        \item The edge $e_i$ on the boundary of $\Cell(T_i,p_i)$ that the ray $\ovar{p_iq}$ intersects.
  \end{itemize}
    \item \textbf{Output}: Either one of two results, \emph{Failure} or a triple $(j',p_{j'},e_{j'})$
	\begin{itemize}
		\item If \emph{Failure}, this is a certificate that $\NN_{(m,\min(j,f)]}(q) \not = \NN_{[1,\min(j,f)]}(q)$
		\item If a triple $(j',p_{j'},e_{j'})$ is returned, it has the following properties:
    \begin{itemize}
    \setlength{\itemsep}{1pt}
    \setlength{\parskip}{0pt}
    \setlength{\parsep}{0pt} 
         \item The integer $j'$ is in the range $(m,  j]$ and 
         $\NN_{(m,j')}(q) \not = \NN_{[1,j')}(q)$.
         \item The point $p_{j'}$ is $ \NN(T_{j'},q)$.
         \item The edge $e_{j'}$ is on the boundary of $\Cell(T_{j'},p_j)$ that the ray $\ovar{p_{j'}q}$ intersects.
 \end{itemize}   
 	\end{itemize}
 \end{itemize}
 }
 \jumpdef
We will show later that $Jump$ runs in $O(j-i)$ time. Implementation details are deferred to Section~\ref{sec:jump}.
\paragraph*{The nearest neighbor procedure.}

A nearest neighbor query can be answered through a series of calls to the $Jump$ function: 
\begin{itemize}
 \setlength{\itemsep}{1pt}
\setlength{\parskip}{0pt}
\setlength{\parsep}{0pt}    
 \item Initialize $i=1, j=2$, $p_1$ to be $\NN(T_1,q)$, and $e_1$ to be the edge of $\Cell(T_1,p_1)$ crossed by the ray $\ovar{p_1q}$; all of these can be found in constant time as $|T_1|=\Theta(1)$. Initialize $p_{\nearest}$ to $p_1$. 

 \item  Repeat the following while $\frac{i+j}{2} \leq  f$:
\begin{itemize}
 \setlength{\itemsep}{1pt}
\setlength{\parskip}{0pt}
\setlength{\parsep}{0pt}    
    \item Run $Jump(i,j,q,p_i,e_i)$. If the result is failure: 
 \begin{itemize}
  \setlength{\itemsep}{1pt}
\setlength{\parskip}{0pt}
\setlength{\parsep}{0pt} 
\item Set $j=j+(j-i)$
 \end{itemize}
   \item Else a triple $(j',p_{j'},e_{j'})$ is returned: 
\begin{itemize}
     \setlength{\itemsep}{1pt}
\setlength{\parskip}{0pt}
\setlength{\parsep}{0pt} 
            \item If $d(p_{j'},q) < d(p_{\nearest},q)$ set $p_{\nearest}=p_{j'}$
            \item Set $i=j'$ and
            set $j=j'+1$
\end{itemize}

\end{itemize}
\item Return $p_{\nearest}$
\end{itemize}
We will show in the rest of this section that, given this jump function as a black box, we can correctly answer a nearest neighbor query in $O(\log n)$ time. 

\begin{figure}[t]
    \centering
    \includegraphics[width=0.5\linewidth,page=2]{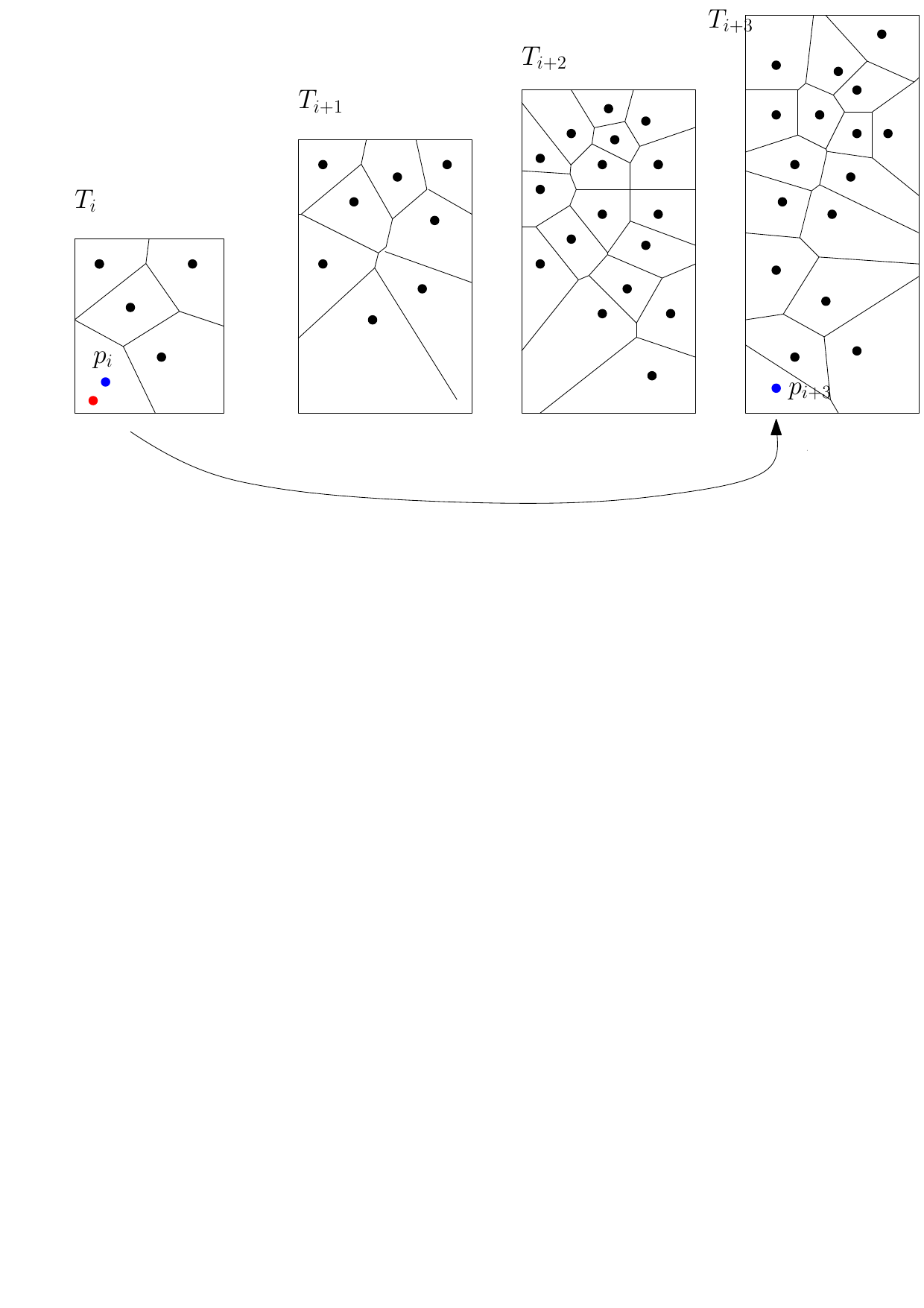}
    \caption{Two iterations of the Jump procedure. The query point $q$ is shown in red. Points $p_i=NN(T_i,q)$, $p_{i+3}=NN(T_{i+3},q)$, and edges  $e_i$ and $e_{i+3}$ are shown in blue.}
    \label{fig:jump}
\end{figure}



\paragraph*{Correctness}
The loop in the jump procedure  will maintain the invariants that $p_{\nearest}$ is $\NN_{[1,\min (f,\frac{i+j}{2})]}(q)$, $i<j$ and $j-i$ is a power of two. The latter is because $j-i$ is set to either 1 or is doubled with each iteration of the loop.
As to the first invariant, before the loop runs, the invariant requires $p_{\nearest} =\NN(T_1,q)$, which $p_1$ and $p_{\nearest}$ are initialized to.
During the loop we subscript $i$ and $j$  by new and old to indicate the value of variables at the beginning and at the end of the loop. Thus $p_{\nearest}=\NN_{[1,\min (f,\frac{i_{\old}+j_{\old}}{2})]}(q)$. Also if the loop runs, we know $\min(j_{\old},f)=j_{\old}$.
We distinguish between two cases depending on whether the jump function returns failure or a triple.

If the jump function returns failure, we know that
$\NN_{(\frac{i_{\old}+j_{\old}}{2},\min(j_{\old},f)]}(q) \not = \NN_{[1,\min(j_{\old},f)]}(q)$. Using this fact we can go from the invariant in terms of the old variables to the new ones:

\begin{align*}
p_{nearest}&=\NN_{[1,\min(f,\frac{i_{\old}+j_{\old}}{2})]}(q)&
\text{Invariant}
\\
&=\NN_{[1,\frac{i_{\old}+j_{\old}}{2}]}(q)
& \frac{i_{\old}+j_{\old}}{2} \leq  f
\\
&=\NN_{[1,\min(j_{\old},f)]}(q)
& \NN_{(\frac{i_{\old}+j_{\old}}{2},\min(j_{\old},f)]}(q) \not = \NN_{[1,\min(j_{\old},f)]}(q)
\\
 & =\NN_{[1,\min(f,\frac{i_{\old}+j_{\old}+(j_{\old}-i_{\old})}{2})]}(q)
 &
\text{math}
\\
 & =\NN_{[1,\min(f,\frac{i_{\new}+j_{\new}}{2})]}(q)
 & i_{\new}=i_{\old};j_{\new}=j_{\old}+(j_{\old}-i_{\old})
\end{align*}

Now we consider the case when  a triple $(j',p_{j'},e_{j'})$ is returned by the $Jump$ function. We know that $\NN_{(m,j']}(q) \not = \NN_{[1,j']}(q)$, and using the same logic as from the failure case we can conclude that the old $p_{nearest}$ is
$\NN_{[1,j')}(q)$. The code sets $p_{nearest}$ to the point that is closer to $q$ among the old $p_{nearest}$, equal to $\NN_{[1,j')}(q)$, and $p_{j'}$, which is $\NN(T_{j'},q)$.
Thus the new $p_{nearest}$ is equal to $\NN_{[1,j']}(q)$ (note the closed interval) which we can rewrite to get the invariant as follows, using the fact that the subscript of $\NN$ is only dependent on the integers in the given range:
\[    \NN_{[1,j']}(q)
    =     \NN_{[1,i_{\new}]}(q)
    =     \NN_{[1,\frac{i_{\new}+i_{\new}+1}{2}]}(q)
    =     \NN_{[1,\frac{i_{\new}+j_{\new}}{2}]}(q)
\]%
When the loop finishes, we have $j>f$ and thus
%
\[p_{nearest} 
= \NN_{[1,\min (f,\frac{i+j}{2})]}(q)
= \NN_{[1,f]}(q)
= \argmin_{p\in \{ T_k|k\in [1,f]  \text{ and } k \in \mathbb{Z}\}}d(p,q)
= \NN(S,q)
\]

\paragraph*{Running time} 
\begin{lemma}
  \label{lemma:querytime}
Given the Jump function, the running time of the nearest neighbor search function in $O(\log n)$.
\end{lemma}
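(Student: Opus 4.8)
The plan is to bound the number of iterations of the main loop and charge each iteration $O(j-i)$ time, using the fact established later that each call to $Jump(i,j,q,p_i,e_i)$ runs in $O(j-i)$ time. Set $\ell_{\old} \coloneqq j_{\old}-i_{\old}$ and $\ell_{\new} \coloneqq j_{\new}-i_{\new}$ for the gap at the beginning and end of a loop iteration. There are two kinds of iterations. In a \emph{failure} iteration, $i$ is unchanged and $j$ is replaced by $j+(j-i)$, so $\ell_{\new}=2\ell_{\old}$: the gap doubles. In a \emph{success} iteration, a triple $(j',p_{j'},e_{j'})$ with $j'\in(m,j]$ is returned, and the code sets $i_{\new}=j'$, $j_{\new}=j'+1$, so $\ell_{\new}=1$: the gap collapses to $1$, and — crucially — the midpoint $m_{\new}=(i_{\new}+j_{\new})/2$ satisfies $m_{\new}=i_{\new}=j' > m_{\old}=(i_{\old}+j_{\old})/2$, so the midpoint $\frac{i+j}{2}$, which is exactly the "prefix of $T$-sets already conquered" by the correctness invariant, strictly increases.

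Next I would argue the total work is $O(f)$, hence $O(\log n)$. The key quantity to track is the midpoint $m=\frac{i+j}{2}$; by the correctness invariant it only ever moves forward, and the loop terminates once it exceeds $f$. On a success iteration $m$ increases by at least... well, at least by a positive integer amount, but that alone is too weak; instead I would observe that on a success iteration the total time spent is $O(j_{\old}-i_{\old}) = O(\ell_{\old})$, while the gap $\ell$ was built up to $\ell_{\old}$ only through a run of consecutive failure iterations (each doubling $\ell$ from $1$), whose total cost is $O(1+2+4+\cdots+\ell_{\old}) = O(\ell_{\old})$ as well — a geometric series. So the cost of a maximal run of failures followed by its terminating success (or by loop exit) is $O(\ell_{\old})$. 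It therefore suffices to bound $\sum \ell_{\old}$ over all such runs.

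To bound that sum, charge $\ell_{\old}$ to the advance of the midpoint: after a run of failures ending in a success at $j'$, the midpoint goes from some value $m_0$ to $j' = i_{\new}$. During the run of failures, $i$ stayed fixed at $i_{\old}$ and $j$ doubled its distance from $i_{\old}$ up to $j_{\old}=i_{\old}+\ell_{\old}$; since a failure iteration only occurs when $\frac{i+j}{2}\le f$, we have $i_{\old}+\ell_{\old}/2 \le f$, and the success returns $j' > \frac{i_{\old}+j_{\old}}{2} = i_{\old}+\ell_{\old}/2$, i.e. the new midpoint $m_{\new}=j' > i_{\old}+\ell_{\old}/2 \ge m_0 + \ell_{\old}/4$ roughly (using $m_0 \le i_{\old}+\ell_0/2$ for the gap $\ell_0\le\ell_{\old}$ at the start of the run; in the worst case the run starts fresh with $\ell_0$ small). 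Thus each run of total cost $O(\ell_{\old})$ advances the midpoint by $\Omega(\ell_{\old})$. Since the midpoint is monotone and never exceeds $f+O(1)$ before the loop stops, $\sum \ell_{\old} = O(f) = O(\log_d n) = O(\log n)$, giving total running time $O(\log n)$. The final computation of $p_{\nearest}$ and the initialization both take $O(1)$ time since $|T_1|=\Theta(1)$, so they do not affect the bound.

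The main obstacle I expect is making the amortization airtight: one must verify that the $\Theta(1)$ "leftover" iterations (the last run, which may exit via the loop condition $\frac{i+j}{2} > f$ rather than via a success) are also covered, and that a success iteration immediately following another success — where $\ell_{\old}=1$ — is handled (it costs $O(1)$ and advances the midpoint by at least $1$, so it is fine). One should also double-check the edge case where $j$ overshoots $f$ during a failure run: the loop condition uses $\frac{i+j}{2}\le f$, not $j\le f$, and $Jump$ is written to handle $j>f$ via the $\min(j,f)$ in its specification, so the doubling of $j$ past $f$ is harmless and at most one such oversized failure iteration occurs per run, costing $O(\ell_{\old})$ which is already accounted for. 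Assembling these observations into the geometric-series-plus-telescoping argument above yields the claimed $O(\log n)$ bound.
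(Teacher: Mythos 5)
Your proposal is correct, and it takes a genuinely different (though related) route from the paper's. The paper uses an explicit potential function $\Phi_t = c(2i_t + j_t)$ and shows in both cases (failure and success) that the iteration cost $a_t \le \Phi_t - \Phi_{t-1} + O(1)$; the bound then follows by telescoping and noting $\Phi$ is $O(f)$ at termination. You instead group iterations into maximal runs of consecutive failures terminated by a success (or by loop exit), bound each run's cost by $O(\ell_{\old})$ via the geometric series $1+2+\cdots+\ell_{\old}$, and then charge that cost to the $\Omega(\ell_{\old})$ advance of the midpoint $m = \tfrac{i+j}{2}$; since $m$ is monotone and bounded by $O(f)$, the total is $O(f) = O(\log n)$. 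The two arguments are really tracking the same progress measure --- writing $m = (i+j)/2$ and $\ell = j-i$, the paper's potential is $2i+j = 3m - \ell/2$, which rises with $m$ and is hurt by $\ell$; your run decomposition makes explicit \emph{why} this choice works, namely that the doubling of $\ell$ during a failure run is dominated by its last term while $m$ simultaneously advances. Your version is arguably more intuitive and self-explaining, while the paper's potential packages the same idea into a shorter case analysis. One small simplification available to you: every run necessarily begins with $\ell_0 = 1$ (after a success the code sets $j=j'+1$, $i=j'$; at initialization $i=1$, $j=2$), so the "$\ell_0 \le \ell_{\old}$" hedging and the ``$m_0 \le i_{\old}+\ell_0/2$'' bound can be replaced by the exact statement $m_0 = i_0 + \tfrac12$, making the $\Omega(\ell_{\old})$ midpoint advance immediate from $j' > i_0 + \ell_{\old}/2$.
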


\begin{proof}
We use a potential argument. Let $i_t$ and $j_t$ denote the values of $i$ and $j$ after the loop has run $t$ times, let $a_t$ denote the runtime of the $t$th iteration of the loop, and let $T$ denote the number of times the loop runs. The total runtime is thus $O(1)+\sum_{t=1}^{T} a_t$. Define $\Phi_t \coloneqq c(2i_t+j_t)$, for some constant $c$ chosen so that $a_t \leq \frac{c}{2}(j_j-i_t)$; as $i_0=1$ and $j_0=2$, $\Phi_0=4c$. (recall that the runtime of Jump is $O(j-i)$ for constant $d$). 


We will argue that for all $t$, 
$ a_t  \leq \Phi_{t}-\Phi_{t-1} + O(1)$.
Summing, this gives $\sum_{t=1}^{T} a_t \leq \Phi_{T}-\Phi_{0}+O(T)$. Since $i$, $j$ and $T$ are in the range $[1,f]$, this bounds $\sum_{t=1}^{T} a_t$ by $3cf+O(f)=\Theta(\log n)$. 
Thus all that remains is to argue that $a_t \leq \Phi_{t}-\Phi_{t-1}$ for the two cases where the $t$th run of Jump ends in failure or  returns a triple.

\noindent
\textbf{Case 1, Jump returns failure.}
In the case of failure, $i$ remains the same, thus $i_t=i_{t-1}$ but $j$ increases by $j-i$, thus $j_t=2j_{t-1}-i_{t-1}$. Thus the potential increases by $c(j-i)$.

 \begin{align*}
 \Phi_t-\Phi_{t-1}&=  c(2i_t+j_t) - c(2i_{t-1}+j_{t-1})\\
 &= c(2i_{t-1}+2j_{t-1}-i_{t-1}) - c(2i_{t-1}+j_{t-1})
\\
 &= c(j_{t-1}-i_{t-1})
\\
&\geq a_t
\end{align*}

\noindent
\textbf{Case 2, Jump returns a triple.} 
$i_{t}$ is set to $j'$ and $j_{t}$ changes to $j'+1$. 
The key observation is that, due to the invariant in our correctness argument,  $j_t \geq \frac{i_{t-1}+j_{t-1}}{2}$.
Thus $i_{t}=j'\ge j_{t-1}\geq \frac{i_{t-1}+j_{t-1}}{2}$ and $j_{t}= j'+1 \ge j_{t-1}+1\geq \frac{i_{t-1}+j_{t-1}}{2}+1$
and the potential change is

 \begin{align*}
 \Phi_t-\Phi_{t-1}&=  c(2i_t+j_t) - c(2i_{t-1}+j_{t-1})\\
 &\geq  c(2\cdot\frac{i_{t-1}+j_{t-1}}{2}+\frac{i_{t-1}+j_{t-1}}{2}+1) - c(2i_{t-1}+j_{t-1})
\\
&= \frac{c}{2}(j_{t-1}-i_{t-i})+c
\\ &\geq a_t
\end{align*}
\end{proof}

\subsection{The jump function}
\label{sec:jump}

\subsubsection{Basic geometric facts}

We begin with our most crucial geometric lemma, the one that we build upon to make our algorithm work. Informally, given point sets $A$ and $B$ which possibly have elements in common, and a query point $q$, if the closest point to $q$ in $B$ is also in $A$, then the closest point to $q$ in $A \cup B$ is in $A$, and not in $B \setminus A$. 

\begin{lemma}
  \label{lemma:beats1}
  Given $i,j$, $i<j$, suppose that there is a point $q$ such that $q\in \Cell(T_i,p_i)\cap \Cell(T_j,p_j)$ for some $p_j \in \Sample_j(j-i)$. Then $q \in \Cell(T_i \cup T_j,p_i)$, or equivalently $\NN(T_i \cup T_j,q)=p_i$.\\
\end{lemma}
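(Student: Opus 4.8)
The plan is to reduce the statement to a three-line chain of distance comparisons, the only substantive point being the observation that the hypothesis forces $p_j$ to actually lie in $T_i$.

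First I would unpack the definition $T_i = S_i \cup \bigcup_{j'=i+1}^{f} \Sample_{j'}(j'-i)$. Since $i < j \le f$, the set $\Sample_j(j-i)$ occurs as the $j' = j$ term of this union, so $\Sample_j(j-i) \subseteq T_i$. By hypothesis $p_j \in \Sample_j(j-i)$, and therefore $p_j \in T_i$. This is the one non-mechanical step, and it is really just bookkeeping: the sampling sets are, by construction, ``pushed down'' into every $T_i$ with $i<j$.

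Next I would translate the two cell memberships into inequalities. From $q \in \Cell(T_i,p_i)$ we get $p_i = \NN(T_i,q)$, hence $d(p_i,q) \le d(p,q)$ for all $p \in T_i$; applied to $p = p_j \in T_i$ this yields $d(p_i,q) \le d(p_j,q)$. From $q \in \Cell(T_j,p_j)$ we get $p_j = \NN(T_j,q)$, hence $d(p_j,q) \le d(p,q)$ for all $p \in T_j$. Now take an arbitrary $p \in T_i \cup T_j$: if $p \in T_i$ then $d(p_i,q) \le d(p,q)$ directly, and if $p \in T_j$ then $d(p,q) \ge d(p_j,q) \ge d(p_i,q)$. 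Since $p_i \in T_i \subseteq T_i \cup T_j$, this shows $p_i$ minimizes the distance to $q$ over $T_i \cup T_j$, i.e.\ $\NN(T_i\cup T_j,q) = p_i$, equivalently $q \in \Cell(T_i\cup T_j,p_i)$.

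I do not anticipate any real obstacle here; this is a warm-up lemma. The only care needed is with ties: to make the cells and the nearest neighbors well-defined one invokes the paper's standing assumption that $\NN$ is unique, which I would state explicitly at the start of the argument.
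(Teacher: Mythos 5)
Your proof is correct and follows essentially the same route as the paper's: observe that $p_j \in \Sample_j(j-i) \subseteq T_i$ by definition, then chain the two distance inequalities coming from the cell memberships. Your write-up is slightly more explicit about the case split over $T_i \cup T_j$, but there is no substantive difference.
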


\begin{proof}
  Since $p_j\in \Sample_j(j-i)$, $p_j\in T_i$ by the definition of $T_i$. 
  Since $q\in \Cell(T_i,p_i)$, $dist(q,p_i)\leq dist(q,p_j)$. 
  However, $q\in \Cell(T_j,p_j)$ and $p_j$ is the closest point to $q$ in $T_j$: $dist(q,p_j) \leq dist(q,p'_j)$ for all $p'_j$ in $T_j$. 
    Combining $dist(q,p_i)\leq dist(q,p_j)$ with $dist(q,p_j) \leq dist(q,p'_j)$ for all $p'_j$ in $T_j$ gives the lemma. 
\end{proof}

\begin{lemma} \label{lemma:convex}
    If all elements of a point set $P$ are in $\Cell(S,p)$ for some point set $S$ and $p\in S$, then all elements of the convex hull of $P$ are in  $\Cell(S,p)$ as well.
\end{lemma}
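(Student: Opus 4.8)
The plan is to show that $\Cell(S,p)$ is convex and then invoke the standard fact that a convex set containing $P$ contains the convex hull of $P$. First I would recall the definition: $\Cell(S,p)$ is the set of points $x$ in the plane whose nearest neighbor in $S$ is $p$, i.e.\ $dist(x,p) \le dist(x,p')$ for every $p' \in S$. For a single competitor $p'$, the locus of points $x$ with $dist(x,p) \le dist(x,p')$ is exactly the closed halfplane bounded by the perpendicular bisector of the segment $pp'$ and containing $p$. Thus $\Cell(S,p) = \bigcap_{p' \in S,\, p' \ne p} H_{p,p'}$, where each $H_{p,p'}$ is a closed halfplane.

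Next I would use the fact that an intersection of convex sets is convex, and each halfplane $H_{p,p'}$ is convex; hence $\Cell(S,p)$ is convex. Then, since $P \subseteq \Cell(S,p)$ by hypothesis and $\Cell(S,p)$ is convex, the convex hull of $P$ — which is the smallest convex set containing $P$ — is also contained in $\Cell(S,p)$. This is precisely the claim.

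I do not expect a genuine obstacle here; the only mild subtlety is the handling of the Euclidean metric (as opposed to a general metric or a weighted/power distance), where the bisector is genuinely a line and the region a halfplane — this is where convexity crucially comes from, and it is worth stating explicitly that we are using the ordinary Euclidean distance so that each constraint $dist(x,p) \le dist(x,p')$ is linear in $x$ (the quadratic terms $\|x\|^2$ cancel). With that noted, the argument is a two-line consequence of convexity of halfplanes and closure of convexity under intersection.
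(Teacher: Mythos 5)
Your proof is correct and follows the same route as the paper, which simply states the result is ``immediate as $\Cell(S,p)$ is convex''; you have merely spelled out why Voronoi cells are convex (intersections of halfplanes from perpendicular bisectors) and invoked the standard fact that a convex set contains the convex hull of any of its subsets.
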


\begin{proof}
    Immediate as $\Cell(S,p)$ is convex.
\end{proof}

Lemmata~\ref{lemma:beats1} and~\ref{lemma:convex} give us a tool to determine which parts of the Voronoi cell of some $p_i$ in $\Vor(T_i)$ must also be part of the Voronoi cell of $p_i$ in $\Vor(T_i \cup T_j)$. We define this region as $\Hull_i(j,p_i)$, and then prove its properties.

Let $\Hull_i(j,p_i)$, for some $p_i \in T_i$,  denote the convex hull of
\[ \{p_i\} \cup ( \Cell(T_i,p_i) \cap   \Cells(T_j,\Sample_j(j-i)) ) .\]
See Figure~\ref{fig:f1} for an example.
\begin{lemma} \label{lemma:hull}
$\Hull_i(j,p_i) \subseteq \Cell(T_i \cup T_j,p_i)$ and thus if $q\in \Hull_i(j,p_i)$, $\NN(q,S) \not \in T_j \setminus T_i$
\end{lemma}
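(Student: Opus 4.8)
The plan is to show two containments: first that $\Hull_i(j,p_i) \subseteq \Cell(T_i \cup T_j, p_i)$, and then that this containment yields the stated conclusion about $\NN(q,S)$. For the first containment, I would argue that every point of the set
\[ \{p_i\} \cup \bigl( \Cell(T_i,p_i) \cap \Cells(T_j,\Sample_j(j-i)) \bigr) \]
lies in $\Cell(T_i \cup T_j, p_i)$, and then invoke Lemma~\ref{lemma:convex} (with $S$ there taken to be $T_i \cup T_j$ and $p = p_i$) to conclude that the convex hull of this set — which is exactly $\Hull_i(j,p_i)$ — also lies in $\Cell(T_i \cup T_j, p_i)$, since $\Cell(T_i \cup T_j, p_i)$ is convex.

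So the heart of the matter is the pointwise claim. First, $p_i$ itself: since $p_i \in T_i \subseteq T_i \cup T_j$, clearly $p_i \in \Cell(T_i \cup T_j, p_i)$ (a site always lies in its own cell). Second, take any point $x \in \Cell(T_i,p_i) \cap \Cells(T_j,\Sample_j(j-i))$. By definition of $\Cells$, there is some $p_j \in \Sample_j(j-i)$ with $x \in \Cell(T_j, p_j)$. Then $x \in \Cell(T_i, p_i) \cap \Cell(T_j, p_j)$ with $p_j \in \Sample_j(j-i)$, which is precisely the hypothesis of Lemma~\ref{lemma:beats1} (playing the role of its "$q$"). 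That lemma gives $x \in \Cell(T_i \cup T_j, p_i)$. This establishes that the generating set of the hull lies in $\Cell(T_i \cup T_j, p_i)$, and convexity finishes the first containment.

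For the second part of the statement, suppose $q \in \Hull_i(j,p_i)$. By what we just proved, $q \in \Cell(T_i \cup T_j, p_i)$, i.e.\ $\NN(T_i \cup T_j, q) = p_i \in T_i$. In particular no point of $T_j \setminus T_i$ is the nearest neighbor of $q$ within $T_i \cup T_j$; formally, for every $p' \in T_j \setminus T_i$ we have $d(q,p') \geq d(q,p_i)$, and since $p_i \notin T_j \setminus T_i$ this inequality is strict or, if not strict, $p_i$ is still preferred (we may assume distances are distinct, or argue via the uniqueness assumption on nearest neighbors stated at the outset). Since $S \supseteq T_i \cup T_j$, the true nearest neighbor $\NN(q,S)$ is at distance at most $d(q,p_i)$ from $q$; combined with the fact that every point of $T_j \setminus T_i$ is strictly farther (or not closer) than $p_i$, we get $\NN(q,S) \notin T_j \setminus T_i$, as claimed.

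I do not anticipate a genuine obstacle here — the lemma is essentially a packaging of Lemmata~\ref{lemma:beats1} and~\ref{lemma:convex}. The one point requiring a little care is the final deduction "$\NN(q,S) \notin T_j \setminus T_i$": one must be slightly careful about whether the nearest-neighbor set of $S$ could contain a point of $T_j \setminus T_i$ that ties $p_i$ in distance. This is handled by the standing general-position / uniqueness assumption on nearest neighbors (stated when $\NN(P,q)$ was introduced), which rules out such ties; alternatively one notes $p_i \in S$ is a strictly better (or equally good) candidate and $T_j \setminus T_i$ points are all at distance $\geq d(q,p_i)$, so none of them can be \emph{the} (unique) nearest neighbor of $q$ in $S$.
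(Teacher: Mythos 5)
Your proof is correct and takes essentially the same route as the paper: combine Lemma~\ref{lemma:beats1} for the pointwise containment of the generating set with Lemma~\ref{lemma:convex} for the convex hull. The only difference is that you spell out the final deduction about $\NN(q,S)$, which the paper treats as immediate; that elaboration is sound.
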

\begin{proof}
The point $p_i$ is in its own cell, $\Cell(T_i \cup T_j,p_i)$, and by Lemma~\ref{lemma:beats1}, 
all elements of $\Cell(T_i,p_i) \cap   \Cells(T_j,\Sample_j(j-i))$
are also in $\Cell(T_i \cup T_j,p_i)$. Thus the convex hull of these points is a subset of 
$\Cell(T_i \cup T_j,p_i)$ by Lemma~\ref{lemma:convex}.
\end{proof}
Thus, if a query point $q$ is in $\Hull_i(j,p_i)$, then, by Lemma~\ref{lemma:hull},  the set $T_j$ can be ignored because $dist(q,p_i)\le dist(q,p')$ for any $p'\in T_j \setminus T_i$.

Geometrically, determining whether a point in $\Cell(T_i,p_i)$ is a $\Hull_i(j,p_i)$ and thus a full search in $\Vor(T_j)$ can be skipped is what our jump function does. We now need to turn to examining the combinatorial issues surrounding $\Hull_i(j,p_i)$ and its interaction with $\Cell(T_i,p_i)$ as we need the complexity of the regions examined to be bounded in such a way to allow efficient searching to see if a point is in $\Cell(T_i,p_i)$. We begin by defining the part of $\Cell(T_i,p_i)$ that is not in $\Hull_i(j,p_i)$ as $\overline{\Hull}_i(j,p_i)$ and proving a number of properties of this possibly disconnected region.

\begin{figure}[t]
    \centering   \includegraphics[width=.99\textwidth,page=1]{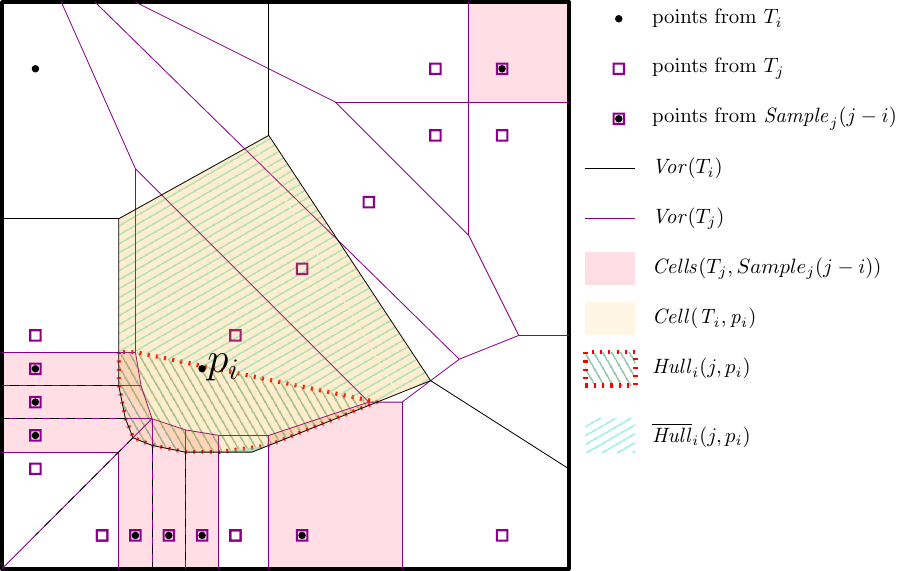}
    \caption{Illustration of the computation of $\Hull_i(j,p_i)$ and
    $\overline{\Hull}_i(j,p_i)$.
    Observe that $\Hull_i(j,p_i)$ is the convex hull of those parts of $\Cells(T_j,\Sample_j(j-i))$ (shaded pink) that are inside $\Cell(T_i,p_i)$ (shaded tan).
 $\overline{\Hull}_i(j,p_i)$ is simply the remainder of $\Cell(T_i,p_i)$, and has two connected components, including a small one at the bottom. By Lemma~\ref{lemma:hull}, the closest point in $T_i \cup T_j$ to all points in $\Hull_i(j,p_i)$ is $p_i$.
     } \label{fig:f1}
\end{figure}


\begin{lemma} \label{lem:hullbar}
    Consider the region 
    \[\overline{\Hull}_i(j,p_i) \coloneqq \Cell(T_i,p_i) \setminus \Hull_i(j,p_i) \]
\begin{enumerate}
    \item Each connected component 
   of $\overline{\Hull}_i(j,p_i)$ is a subset of the union of Voronoi cells in one element of $\Pieces_j(j-i)$; that is, each connected component of $\overline{\Hull}_i(j,p_i)$ is a subset
    of 
    $\Cells(T_j,\Piece_j^\ell (j-i))$
    for some $\Piece_j^\ell(j-i) \in \Pieces_j(j-i)$.
    \item $\overline{\Hull}_i(j,p_i)$ intersects each bounding edge of $\Cell(T_i,p_i)$ in at most two connected components, each of which includes a vertex of $\Cell(T_i,p_i)$.
    \item Any line segment $\overline{p_iq}$, where $q$ is on the boundary of $\Cell(S_i,p_i)$ 
    intersects $\overline{\Hull}_i(j,p_i)$ in at most one connected component that, if it exists, includes $q$.
    \item \label{p:hullbar:4} Let $\overline{qr}$ be a boundary edge of $\Cell(T_i,p_i)$. The solid triangle $\triangle p_iqr$ intersects at most $d^{O(j-i)}$ edges on the boundary separating $\Hull_i(j,p_i)$ from $\overline{\Hull}_i(j,p_i)$.
\end{enumerate}
\end{lemma}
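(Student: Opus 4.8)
I would prove the four parts of Lemma~\ref{lem:hullbar} more or less in the stated order, since later parts lean on earlier ones. Throughout, the key structural fact is that $\overline{\Hull}_i(j,p_i)$ is obtained from the convex polygon $\Cell(T_i,p_i)$ by removing a convex region $\Hull_i(j,p_i)$ that contains $p_i$ and all portions of $\Cell(T_i,p_i)$ covered by the sampled cells $\Cells(T_j,\Sample_j(j-i))$. So every point of $\overline{\Hull}_i(j,p_i)$ lies in $\Cell(T_i,p_i)$ but is \emph{not} covered by any fringe cell of $\Voron(T_j)$; equivalently, it is covered by a non-fringe cell, i.e.\ a cell of some $\overline{\Sep}_j^\ell(j-i)$.

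\textbf{Part 1.} Take a connected component $C$ of $\overline{\Hull}_i(j,p_i)$. Every point of $C$ lies in $\Cell(T_j,p)$ for some non-fringe vertex $p$, hence $p\in\overline{\Sep}_j^\ell(j-i)$ for some unique $\ell$ (uniqueness by property (iii) of the Frederickson decomposition: a non-fringe vertex of a piece is in no other piece). I would argue that the index $\ell$ is constant along $C$: if $C$ met cells of two different pieces' interiors, then as $C$ is connected it would have to cross from one interior region to another, but adjacent Voronoi cells in $\Voron(T_j)$ that belong to interiors of two \emph{different} pieces cannot be adjacent without one of the two endpoints being a fringe vertex — contradiction, since the whole of $C$ avoids fringe cells. (Here I use that $\Cells(T_j,\overline{\Sep}_j^\ell)$ for distinct $\ell$ are interior-disjoint except possibly sharing boundary with fringe cells.) Hence $C\subseteq \Cells(T_j,\overline{\Sep}_j^\ell(j-i))\subseteq \Cells(T_j,\Piece_j^\ell(j-i))$.

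\textbf{Part 2.} Fix a bounding edge $\overline{qr}$ of the convex polygon $\Cell(T_i,p_i)$. Since $\Hull_i(j,p_i)$ is convex and contains $p_i$, its intersection with the line through $\overline{qr}$ is a (possibly empty) subsegment of $\overline{qr}$; its complement within $\overline{qr}$ is therefore at most two subsegments, each containing an endpoint ($q$ or $r$) of the edge. Points of $\overline{qr}$ outside $\Hull_i(j,p_i)$ are exactly the points of the edge in $\overline{\Hull}_i(j,p_i)$ (using $\overline{qr}\subseteq\Cell(T_i,p_i)$), so $\overline{\Hull}_i(j,p_i)$ meets $\overline{qr}$ in at most two connected subsegments, each touching a vertex of $\Cell(T_i,p_i)$.

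\textbf{Part 3.} For a boundary point $q$ of $\Cell(S_i,p_i)=\Cell(T_i,p_i)$, the segment $\overline{p_iq}$ runs from $p_i\in\Hull_i(j,p_i)$ to $q$. Since $\Hull_i(j,p_i)$ is convex and contains $p_i$, the set of parameters $t\in[0,1]$ with $(1-t)p_i+tq\in\Hull_i(j,p_i)$ is an interval containing $0$; hence the complementary set (the part of the segment in $\overline{\Hull}_i(j,p_i)$) is a single subsegment, which if nonempty contains the endpoint $q$. That single subsegment is one connected component of the intersection with $\overline{\Hull}_i(j,p_i)$.

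\textbf{Part 4 — the crux.} Let $\overline{qr}$ be a boundary edge of $\Cell(T_i,p_i)$ and consider the solid triangle $\triangle p_iqr$. The boundary between $\Hull_i(j,p_i)$ and $\overline{\Hull}_i(j,p_i)$ consists of (a) portions of $\partial\Cell(T_i,p_i)$ and (b) portions of the convex-hull boundary $\partial\Hull_i(j,p_i)$ lying in the interior of $\Cell(T_i,p_i)$; only type-(b) arcs carry ``new'' edges. I would bound the number of Voronoi edges (of $\Voron(T_j)$, or equivalently of the relevant cell complex) that such a boundary arc crosses inside $\triangle p_iqr$. By Part 1, each connected component of $\overline{\Hull}_i(j,p_i)$ met by the triangle lies inside $\Cells(T_j,\overline{\Sep}_j^\ell(j-i))$ for a single piece, and by Lemma~\ref{l:inner} the complexity of $\Cells(T_j,\overline{\Sep}_j^\ell(j-i))$ is $O(d^{4(j-i)})$; the boundary of $\Hull_i(j,p_i)$ inside such a component is a convex arc and hence crosses the $O(d^{4(j-i)})$-complexity boundary of that region $O(d^{4(j-i)})$ times (a convex curve crosses a planar region boundary of complexity $m$ at most $O(m)$ times, since its intersection with each edge is a subsegment and convexity controls re-entry). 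The remaining question is how many distinct pieces' interior-components the triangle $\triangle p_iqr$ can touch; by Part 3 a radial segment from $p_i$ meets only one component, and the triangle is swept by such radial segments through the single edge $\overline{qr}$, while Part 2 bounds the components met along $\overline{qr}$ by two — so only $O(1)$ components, hence $O(1)$ distinct pieces, are involved. Multiplying, the total number of separating edges in $\triangle p_iqr$ is $O(d^{4(j-i)})=d^{O(j-i)}$.

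\textbf{Expected main obstacle.} The delicate point is Part 4's counting: making rigorous that ``a convex arc crosses a bounded-complexity planar region's boundary a bounded number of times'' in this setting, and correctly bounding the number of pieces (and hence interior-components) that $\triangle p_iqr$ can intersect — the interplay between the radial structure (Part 3) and the single-edge structure (Part 2) is what keeps this $O(1)$ rather than growing, and I would want to handle carefully the fringe-cell portions of the boundary, which can have arbitrarily high complexity individually but whose contribution to the \emph{separating} boundary inside the triangle is still controlled because those high-complexity cells are exactly the ones swallowed into $\Hull_i(j,p_i)$. Part 1's claim that a connected component cannot straddle two pieces' interiors also needs a careful adjacency argument about the Frederickson decomposition and Delaunay adjacency, and is a plausible place for a subtle gap.
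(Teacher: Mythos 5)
Your proof is correct and follows essentially the same approach as the paper: Part~1 via the ``a connecting path would have to pass through a fringe cell, which lies in $\Hull_i(j,p_i)$'' argument, Parts~2 and~3 directly from convexity of $\Hull_i(j,p_i)$, and Part~4 by combining Lemma~\ref{l:inner} with the $O(1)$ bound on components meeting the triangle from Parts~1--3. The only stylistic divergence is in Part~4, where you count crossings of a convex arc against the piece-interior boundary, while the paper more directly observes that the hull boundary's vertices are drawn from the vertices of $\Cells(T_j,\overline{\Sep}_j^\ell(j-i))$ and ``taking the convex hull only decreases complexity''; both deliver the same $d^{O(j-i)}$ bound.
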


\begin{proof}
    \begin{enumerate}
        \item \label{p1} Suppose one connected component $\overline{\Hull}_i(j,p_i)$ contains points $p,p'$ in  both 
    $\Cells(T_j,\Piece^\ell_j(j-i))$ and $\Cells(T_j,\Piece^{\ell'}_j(j-i))$ where $\Piece^\ell_j(j-i)$ and $\Piece^{\ell'}_j(j-i)$ are 
        different elements of $Pieces(S_j,j-i)$. Consider a polyline that connects $p$ and $p'$ while remaining in the same connected component of $\overline{\Hull}_i(j,p_i)$. Such a polyline must cross, at some point, some cell $\Cell(T_j,p_j)$ of some $p_j \in T_j$ where $p_j \in \Piece^\ell_j(j-i)$ but 
        where the cell $\Cell(T_j,p_j)$ is adjacent to at least one other cell in $\Voron(T_j)$ that is not in $\Piece^\ell_j(j-i)$.
        Thus $p_j$ is  by definition in $\Sep_j^\ell(j-i)$; thus $p_j$ is in $\Hull_i(j,p_i)$ by its definition in Lemma~\ref{lemma:hull}. But this contradicts $p_j$ is in  $\overline{\Hull}_i(j,p_i)$.
        \item \label{p2} 
        If this does not hold, there are points $q_2$ and $q_3$ on $\overline{\Hull}_i(j,p_i)$, with $q_2$ closer to $j$ than $q_3$, such that $q_2\not\in \overline{\Hull}_i(j,p_i)$ and $q_3 \in \overline{\Hull}_i(j,p_i)$.
         But this cannot happen: We know $p\in Hull_i(j,p_i)$ by construction and if $p$ and $q_2$ are in $Hull_i(j,p_i)$, $q_2$ must be as well because the hull is a convex set.
        \item \label{p3} 
        By a similar argument as the last point, if this did not hold, there would be points $p_1,q_1,q_2,q_3$ in order on $\overline{p_iq}$, where there are some points $q_1,q_2,q_3,q_4,q_5$, in order, such that $q_1,q_3,q_5\in \overline{\Hull}_i(j,p_i)$ and $q_2,q_4 \in \Hull_i(j,p_i)$. But this cannot happen: if $q_2$ and $q_4$ are in $\Hull_i(j,p_i)$, $q_3$ must be as well because the hull is a convex set.
        \item \label{p4} 
        The complexity of $\Hull_i(j,p_i) \cap \triangle p_iqr$ is at most the complexity
        of $\Cells(T_j,\overline{Seps}_j^\ell(j-i))$
        and $\Cells(T_,\overline{Seps}_j^{\ell'}(j-i))$, where 
        $q \in \Cells(T_j,\overline{Seps}_j^{\ell}(j-i))$ 
        and
        $r \in \Cells(T_j,\overline{Seps}_j^{\ell'}(j-i))$. 
        By Lemma~\ref{l:inner}, the complexity  of these regions (within the triangle $\triangle p_iqr$) is   at most $d^{O(j-i)}$. Taking the convex hull only decreases the complexity of the objects on which a hull is defined.
        
    \end{enumerate}
\end{proof}

We can use these geometric facts to present the following corollary, which shows how we can use inclusion in $\Hull_i(j,p_i)$ to determine where to find the nearest neighbor of $q$ in $T_j$ or to determine that $\NN(S,q)$, $q$'s nearest neighbor in  $S$,  is not in $T_j$ \emph{without needing to find the nearest neighbor of $q$ in $T_j$}. We emphasize this is the main novel idea, since, as discussed in the introduction, finding the nearest neighbors in every $T_j$ in logarithmic time is not possible.

\begin{corollary}
    Given:
    \begin{itemize}
     \setlength{\itemsep}{1pt}
\setlength{\parskip}{0pt}
\setlength{\parsep}{0pt} 
        \item $i$ and $j$, $i<j$
        \item a query point $q$
        \item a point $p_i$ where $p_i=\NN(T_i,q)$
        \item the edge $\overline{v_1v_2}$ on the boundary of $\Cell(T_i,p_i)$ that the ray $\ovar{p_iq}$ intersects.
 \end{itemize}
 then, by testing $q$ against the part of $\Hull_i(j,p_i)$ that is inside $\triangle p_iv_1v_2$ and has complexity $d^{O(j-i)}$ one of the following is true:
\begin{itemize}
 \setlength{\itemsep}{1pt}
\setlength{\parskip}{0pt}
\setlength{\parsep}{0pt} 
\item If $q$ is inside $\Hull_i(j,p_i)$, then $q$ is in $\Cell(T_i \cup T_j,p_i)$ and $\NN(S,q)$ is not in $T_j$.
\item If $q$ is outside $\Hull_i(j,p_i)$ (and thus inside $\overline{\Hull}_i(j,p_i)$): then $\NN(T_j,q)$ is in the same element of $\overline{Seps}_j(j-i)$ as either $v_1$ or $v_2$.
\end{itemize}
\end{corollary}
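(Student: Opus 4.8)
The plan is to verify each bullet directly from the geometric lemmas already established, using the hypotheses that $p_i = \NN(T_i,q)$ and that $\overline{v_1v_2}$ is the edge of $\Cell(T_i,p_i)$ crossed by $\ovar{p_iq}$. First I would observe that since $q \in \Cell(T_i,p_i)$ and the segment $\overline{p_iq}$ lies entirely within the convex cell $\Cell(T_i,p_i)$, the point $q$ is contained in the solid triangle $\triangle p_iv_1v_2$: indeed $p_i$ is a vertex of this triangle, and $q$ lies on the ray from $p_i$ toward the edge $\overline{v_1v_2}$, hence on a segment from $p_i$ to a point of that edge. Consequently, deciding whether $q \in \Hull_i(j,p_i)$ only requires testing $q$ against $\Hull_i(j,p_i) \cap \triangle p_iv_1v_2$. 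By Lemma~\ref{lem:hullbar}, point~\ref{p:hullbar:4}, the boundary separating $\Hull_i(j,p_i)$ from $\overline{\Hull}_i(j,p_i)$ inside this triangle has complexity $d^{O(j-i)}$, which gives the claimed complexity bound and, combined with point~\ref{p3} of the same lemma (the segment $\overline{p_iq}$ meets $\overline{\Hull}_i(j,p_i)$ in at most one connected component containing $q$), makes the membership test well-defined and local.

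Next I would dispatch the first case. If $q \in \Hull_i(j,p_i)$, then Lemma~\ref{lemma:hull} immediately gives $q \in \Cell(T_i \cup T_j,p_i)$, i.e. $\NN(T_i\cup T_j,q)=p_i$; in particular $dist(q,p_i)\le dist(q,p')$ for every $p'\in T_j\setminus T_i$, while for $p'\in T_j\cap T_i$ we also have $dist(q,p_i)\le dist(q,p')$ since $p_i=\NN(T_i,q)$. Hence no point of $T_j$ is strictly closer to $q$ than $p_i$, so $\NN(S,q)\notin T_j$ (or more precisely, if $\NN(S,q)$ happened to lie in $T_j$ it would equal $p_i\in T_i$, so we lose nothing by ignoring $T_j$). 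This is exactly the stated conclusion.

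For the second case, suppose $q\notin \Hull_i(j,p_i)$, so $q\in \overline{\Hull}_i(j,p_i)$ by definition of the latter as $\Cell(T_i,p_i)\setminus\Hull_i(j,p_i)$, and $q$ lies in some connected component $C$ of $\overline{\Hull}_i(j,p_i)$. By point~\ref{p1} of Lemma~\ref{lem:hullbar}, $C\subseteq \Cells(T_j,\Piece_j^\ell(j-i))$ for a single piece $\Piece_j^\ell(j-i)$. Moreover $C$ is disjoint from $\Cells(T_j,\Sample_j(j-i))$: any point of $C$ lying in $\Cells(T_j,\Sep_j^\ell(j-i))$ would, being also in $\Cell(T_i,p_i)$, be a point of $\Cell(T_i,p_i)\cap\Cells(T_j,\Sample_j(j-i))\subseteq\Hull_i(j,p_i)$, contradicting $C\subseteq\overline{\Hull}_i(j,p_i)$. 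Hence $C\subseteq \Cells(T_j,\overline{\Sep}_j^\ell(j-i))$, and since $q\in C$ we get $\NN(T_j,q)\in\overline{\Sep}_j^\ell(j-i)$. Finally, point~\ref{p4} of Lemma~\ref{lem:hullbar} (or directly point~\ref{p:hullbar:4} and its proof) identifies the relevant piece(s) as those containing $v_1$ and $v_2$: $q$ lies in $\triangle p_iv_1v_2$ and on the segment $\overline{p_iq}$ whose far endpoint lies on $\overline{v_1v_2}$, so the connected component of $\overline{\Hull}_i(j,p_i)$ containing $q$ reaches the boundary edge $\overline{v_1v_2}$ at a point near $v_1$ or near $v_2$; by point~\ref{p2} each such component touching $\overline{v_1v_2}$ contains a vertex of $\Cell(T_i,p_i)$, namely $v_1$ or $v_2$, and these vertices lie in specific elements of $\overline{\Seps}_j(j-i)$. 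Therefore $\NN(T_j,q)$ is in the same element of $\overline{\Seps}_j(j-i)$ as $v_1$ or as $v_2$.

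The main obstacle I anticipate is the bookkeeping in the second case: making rigorous the claim that the component of $\overline{\Hull}_i(j,p_i)$ containing $q$ is "anchored" at $v_1$ or $v_2$, which requires carefully combining points~\ref{p2} and~\ref{p3} of Lemma~\ref{lem:hullbar} with the fact that $q$ sits on the particular ray $\ovar{p_iq}$ crossing $\overline{v_1v_2}$ — one must rule out the component drifting to touch a different edge of $\Cell(T_i,p_i)$. Everything else is a direct invocation of Lemmas~\ref{lemma:beats1}, \ref{lemma:hull}, \ref{l:inner} and \ref{lem:hullbar}.
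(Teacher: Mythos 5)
Your proof is correct and is precisely what the preceding lemmas are designed to yield: the paper states this as an unproved corollary, and your chain of reasoning (Lemma~\ref{lemma:hull} for the first bullet; then point~\ref{p1} of Lemma~\ref{lem:hullbar} together with the disjointness of $\overline{\Hull}_i(j,p_i)$ from $\Cells(T_j,\Sample_j(j-i))$ to place the component in a single $\overline{\Sep}_j^\ell(j-i)$, and points~\ref{p2} and~\ref{p3} to anchor that component at $v_1$ or $v_2$) is exactly the intended argument. Your side remark that the first bullet's conclusion is, strictly speaking, $\NN(S,q)\notin T_j\setminus T_i$ (since $p_i$ may itself belong to $T_j$ via sampling) is a correct and appropriate refinement, matching the precise wording of Lemma~\ref{lemma:hull}.
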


\subsubsection{Implementing the jump function}

We now use one additional idea to speed up the jump function: While testing if $q$ is inside or outside of the part of $\Hull_i(j,p_i)$ that intersects $\triangle p_iv_1v_2$ can be done in time $O((j-i)\log d)$ (since the complexity of this part of the hull is $d^{O(j-i)}$ by Lemma~\ref{lem:hullbar}, point~\ref{p:hullbar:4}), we can in fact do something stronger. We can test if $q$ is inside or outside of the part of $\Hull_i(j',p_i)$ that intersects $\triangle p_iv_1v_2$, \textbf{for all $\frac{i+j}{2}<j'<j$}, in the same time $O((j-i)\log d)$. This is because the complexity of the subdivision of the plane induced by  $\frac{j-i}{2}$ hulls of size $d^{O(j-i)}$ has complexity  $d^{O(j-i)}\cdot O((j-i)^{2})$ and thus we can determine in time $O((j-i)\log d)$ which is the smallest $j'$ in the range $i<j'\leq j$ where $q$ is not inside $\Hull_i(j',p_i)$, or determine that for all $j'$ in the range $i<j'\leq j$.

Thus we obtain the final jump procedure:



\noindent \textbf{Method to compute $Jump(i,j,q,p_i,e_i)$:}
    \begin{enumerate}
        \item \label{jumpstep1} Test to see what is the smallest $j'$, $\frac{i+j}{2}< j'\leq j$ such that  $q$ is outside of the part of $\Hull_i(j',p_i)$ that intersects $\triangle p_iv_1v_2$. This will be done in time $O(\log d(j-i))$ with the convex hull search structure described in Section~\ref{sec:aux-datastr}. If it is inside all such hulls, then $\NN(S,q) \not \in T_{j'}$ for all $j'$ such that $\frac{i+j}{2}< j'\leq j$ and failure is returned. Otherwise:
        \item Let $\Piece_{v_1}$ and $\Piece_{v_2}$ denote the elements of $\Pieces_j(j'-i)$ that contain $v_1$ and $v_2$. These will be precomputed and accessible in constant time with the piece lookup  structure described in Section~\ref{sec:aux-datastr}.
        \item Search in $\Voron(T_{j'},\Piece_{v_1})$ and $\Voron(T_{j'},\Piece_{v_2})$ to find $\NN(\Piece_{v_1} \cup \Piece_{v_2},q)$, call it $p_{j'}$. Note that, $p_{j'}$ is $\NN(T_j,q)$.
         As both $\Voron(T_{j',}\Piece_{v_1})$ and $\Voron(T_{j'},\Piece_{v_2})$ have complexity $O(d^{j'-i})$, this can be done in time $O((j-i)\log d)$ 
         with the piece interior search  structure described in Section~\ref{sec:aux-datastr}.
        \item Find the edge $e_{j'}$ bounding $\Cell(T_{j'},p_{j'})$ that the ray $\ovar{p_{j'}q}$ intersects. As $\Cell(T_{j'},p_{j'})$ has complexity $d^{O(j'-i)}$, this can be done in time $O((j-i)\log d)$ with binary search.
    \end{enumerate}

\subsection{Data structures needed}
\label{sec:aux-datastr}

The data structure is split into levels, where level $i$ consists of:
\begin{enumerate}[I.]
\item $S_i$,
\item $T_i$,
\item The Voronoi diagram of $T_i$, $\Voron(T_i)$, and a point location search structure for the cells of $\Voron(T_i)$,
\item The Delaunay triangulation of $T_i$, $G(T)$.
\item Additionally, we keep for each $j$, $1\leq j <i$:
    \begin{enumerate}[i.]
        \item The partition of $T_i$ into $\Pieces_j(k) \coloneqq \{\Piece_j^1(k), \Piece^2_j(k), \ldots  \Piece^{|\Pieces_j(k)|}_j(k)\}$
        \item The partition of each $\Piece_j^\ell(k)$ into $\Sep_j^\ell(k)$ and $\overline{Sep}_j^\ell(k)$
        \item The set $\Sample_j(k) \coloneqq \bigcup_{Sep \in \Seps_j(k)} Sep$
    \end{enumerate}
\end{enumerate}
For any level $i$, this information can be computed from $T_i$ in time $O(|T_i| \log |T_i|)$ using \cite{KleinMS13}[Theorem 3] to compute the partition into pieces, and standard results on Delaunay/Voronoi construction.

Additionally, less elementary data structures are needed for each level, which we describe separately: the convex hull search structure, the piece lookup structure, and the piece interior search structure.

\paragraph*{Convex hull search structure}

For level $i$, a convex hull structure is built for every combination of:
\begin{itemize}
 \setlength{\itemsep}{1pt}
\setlength{\parskip}{0pt}
\setlength{\parsep}{0pt} 
	\item A point $p_i$ in $T_i$
	\item An edge $e_i$ of $\Cell(T_i,p_i)$
	\item An index $j$ where $i < j$, $\frac{i+j}{2}\leq f$ and $j-i$ is a power of 2.
        \end{itemize}
        A convex hull structure answers queries of the form: given a point $q$ in $\Cell(T_i,p_i)$, return the smallest $j'$, $\frac{i+j}{2}< j'\leq j$ such that  $q$ is outside of the part of $\Hull_i(j',p_i)$ that intersects $\triangle p_iv_1v_2$, where $v_1$ and $v_2$ are endpoints of $e_i$. There are $O(|T_i| \log \log_d |T_i|)$ such structures as $j$ is at most $f=\Theta(\log_d n)$, and the complexity of a Voronoi diagram $\Voron(T_i)$ is linear in the number of points it is defined on.

The method used is to simply store a point location structure which contains subdivision within $\triangle p_iv_1v_2$ formed by the overlay of all boundaries of $\Hull_i(j',p_i)$, for $\frac{i+j}{2}< j'\leq j$. As previously mentioned the complexity of this overlay is $O(d^{j-i}\cdot(j-i)^{2})$ and thus point location can be done in the logarithm of this, which is $O((j-i)\log d)$.

 As noted earlier there are $O(|T_i| \log \log_d n)$ structures, each of which takes space at most $O(j-i)=O(\log_d n)$ plus the number of intersections in the point location structure within the triangle. The $O(j-i)$ comes from the at most 2 edges from each hull that can pass through the triangle without intersection.  For a given $j$, the sum of the complexities of $\Hull_i(j,p_i)$ over all $p_i \in T_i$ is $O(T_i)$. As each hull edge can intersect at most $O(j-i)= O(\log_d n)$ other hull edges, that bounds the total space needed over one $j$ to be $O(n \log_d n)$. The overall space usage is $O(n\log_d^2 n)$.

\paragraph*{Piece lookup structure.}
Level $i$ of the piece lookup structure contains for $j$, $i<j\leq \min(2i,f)$ and for each vertex $v_i$ of $\Voron(T_i)$ the index of which piece $\Piece^\ell_j(j-i) \in Pieces^\ell_j(j-i)$ has $q$ in $\Cells(T_j,\Piece^\ell_j(j-i))$. This can be precomputed using the point location search structure for $\Voron(T_j)$ in time $O(\log d^j)=O(j \log d)=O(i)=O(\log n)$ for fixed $d$ for each of the $O(|T_i|)$ vertices of  $\Voron(T_i)$. Summing over the choices for $j$ gives a total runtime of  $O(|T_i|\log^2 n)$ to pre-compute all answers. The space usage is $O(|T_i|\log n)$.

\paragraph*{Piece interior search structure} For each $1\leq i < j \leq f$ we store a point location structure that supports point location in time $O(j-i)$ in the Voronoi diagram for each set of points in $\Pieces_j(j-i)$. Any standard linear-sized point location structure, such as that of Kirkpatrick~\cite{Kirkpatrick83}, suffices since each element of $\Pieces_j(j-i)$ has $O(4^{j-i})$ elements. For any fixed $j$, there are $j-1$ choices for $i$, and the sets in $\Pieces_j(j-i)$ partition $T_i$. Thus the total size of all these structures is $O(|T_j| \log n)$. The construction cost, given the $\Pieces_j(j-i)$, incurs another logarithmic factor due to the need to construct the Voronoi diagram and the point location structure (we do not assume each piece is connected). Thus the piece  interior search structure for level $j$ is constructed in $O(|T_j|\log^2 n)$ time.

\subsection{Insertion time.} \label{sec:insert}

Our description of the data structures needed can be summarized as follows:

\begin{lemma}
    Level $i$ of the data structure can be built in time and space $O(|T_i| \log^2 n)$ given all levels $j>i$.
\end{lemma}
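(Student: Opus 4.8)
The plan is to prove the lemma by simply aggregating the individual construction-time and space bounds established piecemeal in Section~\ref{sec:aux-datastr}, observing that the dominant term in each case is $O(|T_i|\log^2 n)$. First I would recall from the opening of Section~\ref{sec:aux-datastr} that the ``elementary'' components of level $i$ --- namely $S_i$, $T_i$, $\Voron(T_i)$ with its point-location structure, the Delaunay triangulation $G(T_i)$, and, for each $j<i$, the piece decomposition $\Pieces_j(k)$, the splitting of each $\Piece^\ell_j(k)$ into $\Sep^\ell_j(k)$ and $\overline{\Sep}^\ell_j(k)$, and the sample $\Sample_j(k)$ --- are all computable from $T_i$ in time $O(|T_i|\log|T_i|)$, invoking \cite{KleinMS13}[Theorem~3] for the piece decomposition and standard Delaunay/Voronoi results for the rest; since $|T_i|=\Theta(d^i)$ by Lemma~\ref{lem:facts}, this is $O(|T_i|\log n)$, and the space is linear, $O(|T_i|)$.

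Next I would account for the three non-elementary structures, all of which are attached to level $i$ (or indexed so that they naturally belong there). For the convex hull search structure I would cite the bound derived in that paragraph: there are $O(|T_i|\log\log_d n)$ individual point-location structures, the overlay inside each triangle $\triangle p_iv_1v_2$ has complexity $O(d^{j-i}(j-i)^2)$, and the total space over one value of $j$ is $O(n\log_d n)$, hence $O(n\log_d^2 n)$ overall; the construction cost is of the same order up to the point-location preprocessing factor, which is subsumed. For the piece lookup structure I would cite the stated $O(|T_i|\log^2 n)$ precomputation time (obtained by doing $O(\log n)$-time point location in $\Voron(T_j)$ for each of the $O(|T_i|)$ vertices of $\Voron(T_i)$ and summing over the $O(\log n)$ choices of $j$) and $O(|T_i|\log n)$ space. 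For the piece interior search structure I would cite the $O(|T_j|\log^2 n)$ construction time and $O(|T_j|\log n)$ space, noting that when this structure is regarded as part of level $j$ (equivalently, re-indexing the level variable) the bound reads $O(|T_i|\log^2 n)$.

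Then I would combine: taking the maximum of $O(|T_i|\log n)$ (elementary parts), $O(|T_i|\log^2 n)$ (piece lookup and piece interior), and the convex-hull bound, and using $\log_d n=\Theta(\log n)$ for fixed $d$, every term is $O(|T_i|\log^2 n)$ in both time and space, which yields the claim. The only subtlety worth spelling out --- and the one place a careless reader might object --- is the dependence direction: the construction of level $i$ must use only data from levels $j\ge i$, which is exactly what Lemma~\ref{lem:facts}, point~2 guarantees (every $\mathit{anything}_i$ depends on the $S_j$ with $j\ge i$ only), so each of the above subroutines can indeed be run ``given all levels $j>i$'' as the statement requires. There is no genuine obstacle here; the work is purely bookkeeping, and the main care is in matching the per-structure indices to the level to which they are charged and in confirming that no term exceeds $O(|T_i|\log^2 n)$.
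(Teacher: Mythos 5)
Your proposal is correct and matches the paper's approach: the paper presents this lemma as a direct summary of the per-structure bounds established in Section~\ref{sec:aux-datastr}, with no separate proof, and your aggregation of those bounds (elementary components at $O(|T_i|\log|T_i|)$, convex hull search at $O(|T_i|\log_d^2 n)$, piece lookup at $O(|T_i|\log^2 n)$ time and $O(|T_i|\log n)$ space, piece interior search at $O(|T_i|\log^2 n)$ after re-indexing to the level it is attached to) is exactly what is intended. Your remarks on the re-indexing of the piece interior search structure and on the one-directional dependence on $S_j$ for $j\ge i$ from Lemma~\ref{lem:facts} are precisely the subtleties a careful reader should confirm.
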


Insertion is thus handled by the classic 
logarithmic method of Bentley and Saxe~\cite{BentleyS80} which transforms a static construction into a dynamic structure, and which we briefly summarize. To insert, we put the new point into $S_1$ and rebuild level 1. Every time a set $S_i$ exceeds the upper limit of $\Theta(d^i)$, half of the items are moved from $S_i$ to $S_{i+1}$ and all levels from $i+1$ down to $S_1$ are rebuilt. So long as the upper and lower constants in the big Theta are at least a constant factor apart, the amortized insertion cost is $O(\log n)$ times the cost per item to rebuild a level, thus obtaining:

\begin{lemma}
    Insertion can be performed with an amortized running time of $O(\log^3 n)$.
\end{lemma}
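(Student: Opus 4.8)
The plan is to combine the per-level construction cost bound with the standard amortized analysis of the Bentley--Saxe logarithmic method. By the preceding lemma, rebuilding level $i$ costs $O(|T_i|\log^2 n)$ time; since $|T_i|=\Theta(d^i)$ by Lemma~\ref{lem:facts}, this is $O(d^i\log^2 n)$, i.e., $O(\log^2 n)$ \emph{per point} currently residing in $S_j$ for $j\le i$ (using that $\sum_{j\le i}|S_j| = \Theta(|T_i|)$, again from Lemma~\ref{lem:facts}). The key structural fact that makes Bentley--Saxe applicable is the one-directional dependence emphasized in the ``note on notation'': everything at level $i$ depends only on the $S_j$ with $j\ge i$, so when we push elements up from $S_i$ to $S_{i+1}$ and rebuild levels $1,\dots,i+1$, no higher level is affected and the rebuild is well-defined.

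First I would set up the potential/token argument. Assign to each point, at insertion time, a number of tokens proportional to $\log_d n$ times the per-point rebuild cost $O(\log^2 n)$, i.e., $O(\log^3 n)$ tokens; more carefully, a point inserted into $S_1$ is charged for the (at most $f=\Theta(\log_d n)$) times it can participate in a rebuild as it migrates from $S_1$ up to $S_f$. Because the big-Theta bounds on $|S_i|$ have the lower and upper thresholds separated by a constant factor, each point moves from $S_i$ to $S_{i+1}$ only after $\Theta(d^i)$ insertions have touched level $i$ — so the amortized number of times any single point is involved in rebuilding level $i$ is $O(1)$, and summed over the $O(\log_d n)$ levels it is $O(\log_d n)$. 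Multiplying by the $O(\log^2 n)$ per-point-per-level rebuild cost gives amortized $O(\log^2 n \cdot \log_d n) = O(\log^3 n)$ per insertion (for constant $d$, $\log_d n = \Theta(\log n)$, but the $\log^2 n$ factor already dominates the statement as written, so I would state it as $O(\log^3 n)$).

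Second, I would verify the two bookkeeping points: (a) the trigger condition — when $|S_i|$ exceeds its upper threshold, moving half of $S_i$ into $S_{i+1}$ keeps both $|S_i|$ and $|S_{i+1}|$ within their $\Theta(d^i)$, $\Theta(d^{i+1})$ windows, and cascades correctly downward; (b) the cost of a single cascade that rebuilds levels $1,\dots,k$ is $\sum_{i=1}^{k} O(|T_i|\log^2 n) = O(|T_k|\log^2 n)$ by the geometric growth of $|T_i|$, and such a cascade up to level $k$ happens only once per $\Theta(d^k)$ insertions, so its amortized contribution is $O(\log^2 n)$; summing the geometric series over $k$ again yields $O(\log^3 n)$ amortized (the extra $\log$ coming from the number of levels, or equivalently from each point being rebuilt $O(\log_d n)$ times).

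The main obstacle is not any deep argument — it is making the amortization airtight given that pieces are a \emph{decomposition}, not a partition, so fringe vertices can be duplicated across pieces and the naive ``$|T_i|$ points'' count could in principle be inflated; here I would lean on Lemma~\ref{lem:facts}, point~6, and property (iv) of Frederickson's decomposition, which guarantee the total size of all pieces (and hence of all $\Sample_j$ contributions) is still $\Theta(|T_i|)$, so the $O(|T_i|\log^2 n)$ build cost and the charging scheme both go through unchanged. Once that is pinned down, the rest is the textbook Bentley--Saxe calculation.
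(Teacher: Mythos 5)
Your proposal is correct and takes essentially the same route as the paper: it invokes the immediately preceding lemma that level $i$ can be rebuilt in $O(|T_i|\log^2 n)$ time, then applies the standard Bentley--Saxe amortization — each point migrates through $O(\log_d n)=O(\log n)$ levels and pays $O(\log^2 n)$ per rebuild it participates in, giving $O(\log^3 n)$ amortized. The paper's own proof is a one-sentence appeal to the Bentley--Saxe method with this same multiplication; you simply spell out the token/potential bookkeeping and the observation (one-directional dependence of level $i$ on $S_j$, $j\ge i$) that makes the cascade well-defined, both of which the paper also states, just more tersely.
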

The performance of our data structure can be summarized as follows:
\begin{theorem}
  \label{theor:semidyn1}
  There exists a semi-dynamic insertion-only data structure that answers two-dimensional nearest neighbor queries in $O(\log n)$ time and supports insertions  in $O(\log^3 n)$ amortized time.  The data structure uses $O(n \log^2 n)$ space. 
\end{theorem}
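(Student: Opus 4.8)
\textbf{Proof proposal for Theorem~\ref{theor:semidyn1}.}
The plan is to assemble the three ingredients already developed in \S\ref{sec:framework}--\S\ref{sec:insert}: (i) the nearest-neighbor search procedure whose correctness was verified via the loop invariant $p_{\nearest}=\NN_{[1,\min(f,(i+j)/2)]}(q)$ and whose running time is $O(\log n)$ by Lemma~\ref{lemma:querytime}, conditional only on $Jump$ running in $O(j-i)$ time; (ii) the implementation of $Jump$ in \S\ref{sec:jump}, where step~\ref{jumpstep1} costs $O((j-i)\log d)$ via the convex hull search structure, and each remaining step costs $O((j-i)\log d)$ because the relevant pieces and Voronoi cells have complexity $d^{O(j-i)}$ (Lemma~\ref{l:inner}, Lemma~\ref{lem:hullbar}); and (iii) the data-structure inventory of \S\ref{sec:aux-datastr}. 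For fixed $d=2$, $\log d=\Theta(1)$ and $j-i=O(f)=O(\log n)$, so every $Jump$ call is $O(j-i)$ and the query bound $O(\log n)$ follows immediately from Lemma~\ref{lemma:querytime}.

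First I would establish the per-level construction bound: level $i$ stores $S_i$, $T_i$, $\Voron(T_i)$ with a point-location structure, $G(T_i)$, and for each $j<i$ the piece decomposition $\Pieces_j(k)$, the splits into $\Sep/\overline{Sep}$, and $\Sample_j(k)$; by \cite{KleinMS13}[Theorem 3] and standard Voronoi/Delaunay construction all of this is $O(|T_i|\log|T_i|)$, and summing the $O(\log_d|T_i|)$ choices of $k$ gives $O(|T_i|\log^2 n)$. To this I add the three auxiliary structures: the convex hull search structure contributes $O(j-i)=O(\log n)$ per $(p_i,e_i,j)$ triple plus overlay intersections, totalling $O(n\log^2 n)$ space over the level (as argued in \S\ref{sec:aux-datastr}); the piece lookup structure is $O(|T_i|\log n)$ space and $O(|T_i|\log^2 n)$ construction time; and the piece interior search structure is $O(|T_j|\log n)$ space and $O(|T_j|\log^2 n)$ construction time. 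Hence level $i$ is built in time and space $O(|T_i|\log^2 n)$ given all levels $j>i$ --- this is exactly the penultimate lemma, and it relies on point~2 of Lemma~\ref{lem:facts} (each $T_i$ depends only on the $S_j$ with $j\ge i$) so that rebuilding a lower level never forces a rebuild of a higher one.

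Next I would invoke the Bentley--Saxe logarithmic method \cite{BentleyS80}: an insertion places the new point in $S_1$ and rebuilds level~$1$; whenever $|S_i|$ exceeds its $\Theta(d^i)$ threshold, roughly half of $S_i$ migrates to $S_{i+1}$ and levels $1$ through $i+1$ are rebuilt. Since $|T_i|=\Theta(d^i)$ (Lemma~\ref{lem:facts}, point~5) and rebuilding level $i$ costs $O(|T_i|\log^2 n)=O(d^i\log^2 n)$, while each element participates in $O(\log_d n)$ rebuilds over its lifetime with geometrically decreasing charge toward the lower levels, the amortized cost per insertion is $O(\log n)$ times the per-element rebuild cost $O(\log^2 n)$, i.e.\ $O(\log^3 n)$; one needs only that the lower and upper constants in the $\Theta(d^i)$ bounds are a constant factor apart, which is standard. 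For space, the levels together occupy $\sum_i O(|T_i|\log^2 n)=O(n\log^2 n)$ since $\sum_i|T_i|=\Theta(n)$.

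The main obstacle --- and the only place real care is required --- is the bookkeeping that makes Bentley--Saxe applicable: one must confirm that the dependence of every object indexed by $i$ is \emph{only} on $S_j$ with $j\ge i$ (the ``note on notation'' in \S\ref{sec:framework}), so that migrating elements upward and rebuilding a contiguous block of low levels leaves all higher levels valid; and one must check that the auxiliary structures at level $i$ (which reference $\Pieces_j(k)$ and $\Voron(T_j)$ for $j<i$, hence $S_{j'}$ with $j'\ge j$, still $\le i$) do not violate this one-directional dependence. Once that is verified, the amortization and the stated bounds follow by the routine logarithmic-method argument, completing the proof of the theorem.
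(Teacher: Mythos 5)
Your proposal is correct and takes essentially the same route as the paper: invoke Lemma~\ref{lemma:querytime} for the $O(\log n)$ query bound (valid since $d$ is constant, so each $Jump$ call is $O(j-i)$), establish the per-level construction bound $O(|T_i|\log^2 n)$ by summing the costs of the piece decompositions and the three auxiliary search structures from \S\ref{sec:aux-datastr}, and then apply the Bentley--Saxe logarithmic method, which is legitimate precisely because every object indexed by $i$ depends only on $S_j$ for $j\ge i$. You also correctly identify this one-directional dependence as the load-bearing structural observation, which is exactly the role the paper's ``note on notation'' plays.
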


\bibliography{neighbor}

\appendix

\newtheorem{restatedlemma}{Lemma}

\section{Proof of Lemma~\ref{lem:facts}}
\label{sec:lemma1}
\begin{restatedlemma}{\bf Facts about $T_i$}  
	\begin{enumerate}
 \setlength{\itemsep}{1pt}
\setlength{\parskip}{0pt}
\setlength{\parsep}{0pt} 
    \item $T_{f}=S_{f}$
    \item $T_i$ is a function of the $S_j$, for $j \geq i$.
    \item $S=\cup_{i=1}^{f} T_i$
    \item $\NN(S,q) \in \bigcup_{i=1}^f \{ \NN(T_i,q)\} $
    \item $|T_i| = \Theta(d^i)$
    \item For any $i$
	$$ \sum_{j=i+1}^f |\Sample_j(j-i)|=\Theta(|T_i|)$$
\end{enumerate}
\end{restatedlemma}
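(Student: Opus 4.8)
Every $T_i$ is assembled from $S_i$ together with samples drawn from the sets $T_j$ with $j>i$, so the natural proof goes by \emph{downward} induction on $i$, from $i=f$ down to $i=1$; each of the six facts is then either read off the definition or falls out of this single induction. Before starting I would fix, once and for all, a deterministic implementation of the piece decomposition (it is produced by a deterministic linear-time algorithm \cite{Frederickson87,Goodrich95,KleinMS13}, with each fringe vertex lying in $O(1)$ pieces), so that $\Pieces_j(k)$, $\Seps_j(k)$, and hence $\Sample_j(k)$, are well-defined functions of the point set $T_j$ and the integer $k$.

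Facts 1--4 are structural. Fact 1 is the base case: for $i=f$ the index set $\{i+1,\dots,f\}$ is empty, so the union in the definition of $T_i$ is empty and $T_f=S_f$. For Fact 2 I would prove, by downward induction and simultaneously with Fact 3, the claim ``$T_i$ is a function of $\{S_m:m\ge i\}$, and $T_i\subseteq S$'': the base case is Fact 1; for $i<f$, each $\Sample_j(j-i)$ with $j>i$ is (by the deterministic choice above) a function of $T_j$ and of $i,j$, while by induction $T_j$ is a function of $\{S_m:m\ge j\}\subseteq\{S_m:m\ge i\}$ and satisfies $\Sample_j(j-i)\subseteq T_j\subseteq S$, so adjoining $S_i$ gives both claims for $T_i$. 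Fact 3 follows: $\bigcup_i T_i\subseteq S$ by the induction, and $\bigcup_i T_i\supseteq\bigcup_i S_i=S$ since $S_i\subseteq T_i$ and $\mcS$ is a partition of $S$. Fact 4 is then immediate: if $p^\star=\NN(S,q)$, Fact 3 gives an $i$ with $p^\star\in T_i$; since $T_i\subseteq S$ and $p^\star$ is closest to $q$ over all of $S$, it is the (unique) nearest neighbor of $q$ in $T_i$, so $\NN(S,q)=\NN(T_i,q)$.

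The quantitative facts 5 and 6 are where the work lies, and both come down to one geometric series. The bound $|T_i|=\Omega(d^i)$ in Fact 5 is immediate from $S_i\subseteq T_i$ and $|S_i|=\Theta(d^i)$. For the upper bounds I would again induct downward: using the stated size bound $|\Sample_j(k)|\le c_2|T_j|/d^{2k}$ and the inductive hypothesis $|T_j|\le A d^j$ for all $j>i$,
\[
  \sum_{j=i+1}^{f}\bigl|\Sample_j(j-i)\bigr|
  \;\le\; c_2 A\sum_{j=i+1}^{f} d^{\,2i-j}
  \;\le\; c_2 A\, d^{2i}\!\!\sum_{j>i} d^{-j}
  \;=\; \frac{c_2 A}{d-1}\, d^{i}.
\]
This is the upper-bound direction of Fact 6, $\sum_{j>i}|\Sample_j(j-i)|=O(d^i)=O(|T_i|)$; and since $|T_i|\le |S_i|+\sum_{j>i}|\Sample_j(j-i)|\le\bigl(c_1+\tfrac{c_2 A}{d-1}\bigr)d^i$, choosing the constants so that $c_1+\tfrac{c_2 A}{d-1}\le A$ closes the induction and yields $|T_i|=O(d^i)$, completing Fact 5. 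For the $\Omega$-direction of Fact 6 (for $i<f$) I would keep only the $j=i+1$ term: $\Sample_{i+1}(1)$ is the union of the fringes of a decomposition of the connected Delaunay graph $G(T_{i+1})$ into $\Theta(|T_{i+1}|/d^{4})$ pieces of bounded size with $O(1)$ overlap, so its number of vertices is $\Omega(|T_{i+1}|/d^{4})=\Omega(d^{i+1}/d^4)=\Omega(d^i)=\Omega(|T_i|)$, using that $d$ is constant.

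The one real obstacle is keeping the constant in the geometric series from accumulating over the $\Theta(\log_d n)$ levels of the downward induction: the recursion $|T_i|\le c_1 d^i+\tfrac{c_2}{d-1}A\,d^i$ closes into $|T_i|\le A d^i$ only when $d-1>c_2$, where $c_2$ is the absolute constant hidden in the size bound for $\Sample$. This is why $d$ should be read as a sufficiently large constant (harmless, since $\log_d n=\Theta(\log n)$ for any constant $d$), or, alternatively, one refines the piece decomposition to a fine enough initial granularity so that $c_2$ is small. A minor, cosmetic point: the $\Theta$ in Fact 6 should be understood for $i<f$ (for $i=f$ the sum is empty), while the $O(|T_i|)$ half, which is all the subsequent space and time analysis uses, holds for every $i$.
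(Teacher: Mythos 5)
Your proof is correct and takes essentially the same approach as the paper's: facts 1--4 are unwound directly from the definition, and facts 5--6 are handled by downward induction on the recurrence $|T_i|\le cd^i+\sum_{j>i}|T_j|/d^{2(j-i)}$. Your added care about the hidden constant in the $\Sample_j(k)=O(|T_j|/d^{2k})$ bound (and the resulting requirement that $d$ be large enough, or the piece granularity refined, for the induction to close) is a legitimate point that the paper's terser proof glosses over, as is your note that only the $O(\cdot)$ half of fact 6 is actually used downstream.
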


\begin{proof} \ \\
	\begin{enumerate}
 \setlength{\itemsep}{1pt}
\setlength{\parskip}{0pt}
\setlength{\parsep}{0pt} 
		\item $ T_f \coloneqq S_f \cup \bigcup_{j=f+1}^{f} \Sample_j(j-i) = S_f$
		\item Follows from the definition of $T_i$ and that $\Sample_j(j-i)$ is a function of $T_j$ and $j-i$ and the fact that $j>i$. 
		\item Follows from the fact that the $S_i$ are a partition of $S$ and all $T_i$ are subsets of $S$.
		\item Follows immediately from the previous point.
		\item 
  Since $|S_i|=\Theta(d^i)$ and $|T_f|=|S_f|=\Theta(d^f)$,
  this can be verified by solving this recurrence: 
		$$ |T_i| \leq cd^i + \sum_{j=i+1}^f \frac{|T_j|}{d^{2(j-i)}}
		$$
		\item $$
		\sum_{j=i+1}^f |\Sample_j(j-i)|
	=
\Theta\left(		\sum_{j=i+1}^f\frac{|T_j|}{d^{2(j-i)}}	\right)
=
\Theta\left(	 |T_i|	\sum_{j'=1}^{\infty}\frac{1}{2^{j'}}	\right) =\Theta(|T_i|)
$$
	\end{enumerate}
\end{proof}

\section{Faster Updates}
\label{sec:ins-faster}
The data structure described in Theorem~\ref{theor:semidyn1} achieves optimal query time but  uses superlinear space. Superlinear space usage is needed to implement the Jump procedure: our implementation stores a poly-logarithmic number of  data items for each Voronoi edge in $T_i$. In this section we explain how the data structure can be modified so that both $O(n)$ space usage and more efficient update time are achieved.
Our improvement is based on the following idea: We maintain the Voronoi diagram for an auxiliary subset $T'_i\subset T_i$ that contains $|T_i|/\operatorname{polylog}(n)$ points.  Before we start the Jump procedure, we locate $q$ in $\Voron(T'_i)$ and use a Voronoi cell of $\Voron(T'_i)$ as the starting point of the Jump procedure.
A  detailed description of the modified data structure is provided below. 

\paragraph*{Data Structure.}
We change the value of the parameter $d$ and set $d=\log^{\eps} n$ for a constant $\eps>0$. We re-define $T_i$ so that $T_i$ contains at most $\frac{|T_j|}{\log^3 n}$ points from every  set $T_j$, $j>i$: 
let  $g=\log^6 n$ points, let $h=\operatorname{pow2}((1/4)\log_d g)$, and let $$ T_i \coloneqq S_i \cup \bigcup_{j=i+1}^{f} \Sample_j(\max(\operatorname{pow2}(j-i),h))$$ where $\operatorname{pow2}(x)=2^{\lceil\log_2 x\rceil}$. Thus each $T_j$ is divided into  pieces so that each piece contains $\Omega(\max(d^{4(j-i)},g))$ points and the total number of pieces is  $O(\min(\frac{|T_j|}{d^{4(j-i)}},\frac{|T_j|}{g}))$. The number of points  that are copied from $T_j$, $j>i$,  to $T_i$ does not exceed  $O(\frac{|T_j|}{\sqrt{g}})$. Hence the set  $T_i\setminus S_i$ contains $O(\frac{|T_i|\log n}{\sqrt{g}})=O(\frac{T_i}{\log^2 n})$ points. 
We define $T'_i= \Sample_h(T_i)\cup (T_i\setminus S_i)$. We construct convex hulls $\Hull_i(j,p)$ for every set $T'_i$, for every $j>i$, and for all $p\in T'_i$. We also construct all auxiliary data  structures for convex hull search described in Section~\ref{sec:aux-datastr} for every set $T'_i$. Since $T'_i$ contains $O(\frac{|T_i|}{\log^2n})$ points, all data structures can be constructed in $O(|T_i|)$ time and use $O(|T_i|)$ space. 

\begin{lemma}
  \label{lemma:tprime}
  If $p_i=\NN(T_i,q)$ is known, we can find $p'_i=\NN(T'_i,q)$ in $O(\log \log n)$ time. 
\end{lemma}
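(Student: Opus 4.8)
\noindent\emph{Proof plan.} First I would dispose of the trivial case: since $T'_i\subseteq T_i$, removing sites only enlarges Voronoi cells, so $\Cell(T'_i,p_i)\supseteq\Cell(T_i,p_i)\ni q$; hence if $p_i\in T'_i$ we return $p'_i=p_i$ in $O(1)$ time, using a bit stored at $p_i$ recording membership in $T'_i$. It then remains to treat $p_i\in S_i\setminus\Sample_h(T_i)$. In this case $p_i$ is a non-fringe vertex of the piece decomposition of $T_i$ whose fringes constitute $\Sample_h(T_i)$ --- a decomposition into pieces of size $O(d^{4h})=\operatorname{polylog}(n)$ --- so by property (iii) of Frederickson's decomposition~\cite{Frederickson87} it lies in a \emph{unique} piece $\Piece^\ell$, with $p_i\in\overline{\Sep}^\ell$. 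When building level $i$ I would store, for every such $p_i$, a pointer to $\Piece^\ell$, and for every piece a linear-size planar point-location structure (e.g.\ Kirkpatrick~\cite{Kirkpatrick83}) for $\Voron(\Piece^\ell\cap T'_i)$. As the total size of all pieces is $\Theta(|T_i|)$, this fits in $O(|T_i|)$ space and in $O(|T_i|)$ time up to the usual logarithmic factors for constructing the small Voronoi diagrams.

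The crux is the geometric claim that $\NN(T'_i,q)\in\Piece^\ell\cap T'_i$. Granting it, we are done: $p'_i=\NN(T'_i,q)=\NN(\Piece^\ell\cap T'_i,q)$, and locating $q$ in $\Voron(\Piece^\ell\cap T'_i)$, a diagram on $O(d^{4h})=\operatorname{polylog}(n)$ sites, takes $O(\log\operatorname{polylog}(n))=O(\log\log n)$ time. I would prove the claim by contradiction, by the same ``a closer site beats a farther one across a separator'' mechanism as Lemma~\ref{lemma:beats1}. Suppose $p'':=\NN(T'_i,q)\notin\Piece^\ell$. Since $q\in\Cell(T_i,p_i)$ and $p_i\in\Piece^\ell$, the point $q$ lies in $\Cells(T_i,\Piece^\ell)$, whereas $p''$, being the centre of its own cell $\Cell(T_i,p'')$ with $p''\notin\Piece^\ell$, lies outside $\Cells(T_i,\Piece^\ell)$. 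The segment $\overline{q\,p''}$ is contained in the convex cell $\Cell(T'_i,p'')$, so it must leave $\Cells(T_i,\Piece^\ell)$; at the first point where it does so it passes from a cell $\Cell(T_i,x)$ with $x\in\Piece^\ell$ into a cell of some site $y\notin\Piece^\ell$. Hence $x$ has a Delaunay neighbour outside $\Piece^\ell$, so $x\in\Sep^\ell\subseteq\Sample_h(T_i)\subseteq T'_i$, and $x\neq p''$. Taking a point $z$ on this segment in the interior of $\Cell(T_i,x)$ gives $\operatorname{dist}(z,x)<\operatorname{dist}(z,p'')$, while $z\in\Cell(T'_i,p'')$ together with $x\in T'_i$ gives $\operatorname{dist}(z,p'')\le\operatorname{dist}(z,x)$ --- a contradiction.

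The main obstacle is exactly this geometric claim; everything else is bookkeeping mirroring the structures of Section~\ref{sec:aux-datastr}. Two technical points would need a sentence of care in the full write-up: degeneracies (the segment meeting a Voronoi vertex, or four cocircular sites), handled by the usual general-position assumption so that $q$ lies in the interior of a unique cell of each of $\Voron(T_i)$ and $\Voron(T'_i)$; and the fact that a Frederickson piece need not be connected, so $\Cells(T_i,\Piece^\ell)$ may be disconnected --- which is harmless, since the argument only uses the \emph{first} exit of the segment from $\Cells(T_i,\Piece^\ell)$, at which the cell being entered is by definition not part of $\Cells(T_i,\Piece^\ell)$ and hence belongs to a site outside $\Piece^\ell$.
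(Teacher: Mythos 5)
Your proof is correct and follows essentially the same approach as the paper's: store a piece index for $p_i$, maintain a small point-location structure on $\Voron(\Piece^\ell\cap T'_i)$ per piece, and show by a convexity/separator contradiction (the Lemma~\ref{lemma:beats1} mechanism) that the nearest neighbor in $T'_i$ must lie inside that piece. Your separate dispatch of the trivial case $p_i\in T'_i$ and the explicit remark about disconnected pieces are small clean-ups rather than a different route.
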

\begin{proof}
  For every point $p\in T_i$ we store the index $l$ such that $p_i$ is in $ \Piece^l_i(h)$. For every set
  $P^l=T'_i\cap \Piece^l_i(h)$, we store its Voronoi diagram and a point location data structure. The set $T'_i\cap \Piece^l_i(h)$ contains $\operatorname{polylog}(n)$ points because $\Piece^l_i(h)$ contains $\operatorname{polylog}(n)$ points. Hence we can find  $p^l=\NN(T'_i\cap \Piece^l_i(h),q)$ in $O(\log\log n)$ time. 

  Suppose that $p'_i\not=p^l$. Then $q$ is in $\Cell(T'_i,p')$ such that the point $p'$ is not in $\Piece^l_i(h)$. The segment $\ov{qp'}$ must intersect some $\Cell(T_i,p_s)$ such that $p_s\in \Sep^{l}_i(h)$. Consider an arbitrary planar point $q'$ on  $\ov{qp'}$ such that $q'$ is in $\Cell(T_i,p_s)$. Since $q'$ is in $\Cell(T_i,p_s)$,  $dist(p_s,q')< dist(p',q')$. On the other hand, $q'$ in $\Cell(T'_i,p')$ by convexity of Voronoi cells. Hence   $dist(p',q')< dist(p_s,q')$. A contradiction. 
Hence $p'_i=p^l$ and we can find $p'_i$  in $O(\log\log n)$ time. 
\end{proof}

\begin{lemma}
  \label{lemma:tprime2}
  Let $p_i=\NN(T_i,q)$ and $p_i'=\NN(T'_i,q)$. 
  If  the edge $v_i$ of $\Cell(T_i,p_i)$ intersected by $\ovar{p_iq}$ is known, we can find the edge $v'_i$ of $\Cell(T'_i,p'_i)$ intersected by $\ovar{p'_iq}$ in $O(\log \log n)$ time.  
\end{lemma}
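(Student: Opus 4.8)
The plan is to follow the route of the proof of Lemma~\ref{lemma:tprime}. That proof has already furnished the index $l$ of the piece $\Piece^l_i(h)$ of $T_i$ containing $p_i$, the fact that $p'_i=\NN(T'_i,q)$ equals $p^l\coloneqq\NN(P^l,q)$ with $P^l\coloneqq T'_i\cap\Piece^l_i(h)$, and a stored Voronoi diagram of $P^l$ together with a point location structure on it. Since $\Piece^l_i(h)$, and hence $P^l$, has only $\operatorname{polylog}(n)$ vertices, a single query against $\Voron(P^l)$ (point location, or a walk along the boundary of one of its cells) costs $O(\log\log n)$ time. So it suffices to prove that the edge $v'_i$ we want --- the edge of $\Cell(T'_i,p'_i)$ crossed by $\ovar{p'_iq}$ --- is actually an edge of the restricted cell $\Cell(P^l,p'_i)$, after which we simply read it and its endpoints off from $\Voron(P^l)$.

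The first ingredient is a coincidence statement: $\Cell(T'_i,p'_i)$ and $\Cell(P^l,p'_i)$ agree throughout the region $\Cells(T_i,\Piece^l_i(h))$ covered by the piece. One inclusion is immediate from $P^l\subseteq T'_i$. For the other, suppose $y\in\Cell(P^l,p'_i)$ has $\NN(T_i,y)\in\Piece^l_i(h)$ but $\NN(T'_i,y)=p''\ne p'_i$; then $dist(p'',y)<dist(p'_i,y)$ forces $p''\notin P^l$, hence $p''\notin\Piece^l_i(h)$, so the segment $\overline{yp''}$ leaves $\Cells(T_i,\Piece^l_i(h))$ through the cell $\Cell(T_i,p_s)$ of some fringe vertex $p_s\in\Sep^l_i(h)\subseteq P^l\subseteq T'_i$; taking $w$ on $\overline{yp''}$ in the interior of that cell gives $dist(p_s,w)<dist(p'',w)$, while $w\in\overline{yp''}\subseteq\Cell(T'_i,p'')$ (convexity of Voronoi cells) gives $dist(p'',w)\le dist(p_s,w)$, a contradiction --- exactly the argument of Lemma~\ref{lemma:tprime}. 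Thus $\Cell(T'_i,p'_i)\cap\Cells(T_i,\Piece^l_i(h))=\Cell(P^l,p'_i)\cap\Cells(T_i,\Piece^l_i(h))$.

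The second ingredient is to localize the crossing of $\ovar{p'_iq}$ with $\partial\Cell(T'_i,p'_i)$ inside that region, so the coincidence applies there and $v'_i$ appears in $\Voron(P^l)$. Since $p'_i\in\Piece^l_i(h)$ and $q\in\Cell(T_i,p_i)\subseteq\Cells(T_i,\Piece^l_i(h))$ ($l$ being the piece of $p_i$), both $p'_i$ and $q$ lie in the piece's region; the segment $\overline{p'_iq}$ is contained in $\Cell(T'_i,p'_i)$, and one then tracks the ray past $q$ and argues, using the given edge $v_i$ of $\Cell(T_i,p_i)$ and the per-vertex piece labels stored for $\Voron(T_i)$ --- again via the separator property of $\Sep^l_i(h)$ and convexity --- that it cannot shortcut into a foreign piece before it first meets $\partial\Cell(P^l,p'_i)$. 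I expect this to be the main obstacle: $p_i$ and $p'_i$ need not coincide and $\Cell(T'_i,p'_i)$ may have large total complexity, so one must carefully confine the relevant part of the ray to the $\operatorname{polylog}(n)$-size structure attached to piece $l$ --- augmenting it, if the crossing falls on the common boundary with a neighboring piece, by the (polylogarithmically many) pieces sharing a fringe vertex with $\Piece^l_i(h)$ --- which keeps the search, and hence the running time, at $O(\log\log n)$.
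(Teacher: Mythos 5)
Your first ingredient is sound and matches the spirit of the paper's argument: within $\Cells(T_i,\Piece^l_i(h))$, the cells $\Cell(T'_i,p'_i)$ and $\Cell(P^l,p'_i)$ coincide, and the contradiction via a fringe vertex $p_s\in\Sep^l_i(h)\subseteq T'_i$ is exactly the separator trick of Lemma~\ref{lemma:tprime}. The genuine gap is in your second ingredient, and you flag it yourself: you do not actually show that the first exit point of $\ovar{p'_iq}$ from $\Cell(T'_i,p'_i)$ lies inside a region you can search in $O(\log\log n)$ time. Your proposed remedy --- ``augment by the polylogarithmically many pieces sharing a fringe vertex with $\Piece^l_i(h)$'' --- is unsubstantiated: the Frederickson decomposition does not bound how many pieces share a given fringe vertex, so neither the correctness of that candidate set nor the $O(\log\log n)$ bound follows.

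The paper avoids this by splitting on whether $p'_i=p_i$. If $p'_i\ne p_i$, it argues the segment from $p'_i$ to the exit point $q_e$ stays inside $\Cells(T_i,\Piece^l_i(h))$, so the edge $v'_i$ separates $\Cell(T'_i,p'_i)$ from a cell whose site is in $\Piece^l_i(h)\cap T'_i$ and can be found by binary search in the size-$\operatorname{polylog}(n)$ structure attached to piece $l$. If $p'_i=p_i$, it uses the supplied edge $v_i$ directly: reading off the piece $\Piece^{l'}_i(h)$ that contains the Voronoi neighbor $p_o$ across $v_i$ (via the per-vertex piece labels), it shows $v'_i$ separates $\Cell(T'_i,p'_i)$ from a cell whose site lies in $\Piece^{l'}_i(h)\cap T'_i$, so one binary-searches in piece $l'$'s structure instead. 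In other words, $v_i$ is used to name a single additional piece, not an unbounded neighborhood of pieces. Without this case split and the precise identification of $l$ or $l'$, your argument does not close.
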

\begin{proof}
  We already explained in Lemma~\ref{lemma:tprime} how $p'_i=\NN(T'_i,q)$ can be found in $O(\log\log n)$ time. We distinguish between two cases.\\
  1. $p'_i\not= p_i$.  We consider the ray $\ovar{p'_iq}$. Let $q_e$ denote the point where $\ovar{p'_iq}$ intersects  a cell $\Cell(p'_e,q)$ for some $p'_e\in T'_i$. If  $p_i$ is in $Piece^l_i(h)$, then every point $r$ on $\ov{p'_iq_e}$ is in $\Cell(T_j,p)$ for some $p\in Piece^l_i(h)$. Using the same arguments as in Lemma~\ref{lemma:tprime}, the point $r$ is in $\Cell(T'_i,p_l)$ for some $p_l\in Piece^l_i(h)\cap T'_i$.  Since the point $q_e$ is in $\Cell(T_i,p_e)$, $q_e$ is in $\Cell(T'_i,p_e)$. Hence $\ovar{p'_iq}$ intersects an edge $v'_i$ at some point $r$, such that $r$ is on $\ov{p'_iq_e}$. Therefore the edge $v'_i$ separates $\Cell(T'_i,p'_i)$ and $\Cell(T'_i,p_l)$ for some $p_l\in Piece^l_i(h)\cap T'_i$.\\
  2. $p'_i=p_i$.  We consider  the ray $\ovar{p'_iq}$. Suppose that the edge $v_i$ hit by $\ov{p'_iq}$ separates $\Cell(p'_i,T_j)$ and $\Cell(p_o,T_j)$ for some point $p_o\in Piece^{l'}_i(h)$. Let $q_e$ denote the first point where $\ovar{p'_iq}$ intersects  a cell $\Cell(p'_i,q)$ for some $p'_i\in T'_i$. Every  point $r$  on $\ov{p'_ip_e}$ in $\Cells(T_j,Piece^{l'}_i(h))$. Hence $r$ is on  $\Cells(T'_i,Piece^{l'}_i(h)\cap T'_i)$. Since $p_e$ is in $\Cell(T'_i,p_e)$, the ray  $\ovar{p'_iq}$ intersects an edge $v'_i$ at some point $r$, such that $r$ is on $\ov{p'_iq_e}$. Therefore the edge $v'_i$ separates $\Cell(T'_i,p'_i)$ and $\Cell(T'_i,p_l)$ for some $p_{l'}\in Piece^{l'}_i(h)\cap T'_i$.\\
  When the index $l$ (resp.\ $l'$) is known, we can find the edge $v'_i$ intersected by $\ovar{p'_iq}$ in $O(\log\log n)$ time by binary search. 
\end{proof}

\paragraph*{Jump Procedure.}
We modify Step~\ref{jumpstep1} of the Jump procedure. Suppose that we know the nearest neighbor $p_i$ of $q$ in $T_i$ and we know that 
$dist(\NN(T_i,q),q) \ge dist(\NN_{[i,\frac{i+j}{2}]}(q),q)$ for some $j>i$.
Using Lemma~\ref{lemma:tprime}, we find $p'_i=\NN(T'_i,q)$ in $O(\log\log n)$ time. We also find the edge $e'_i=\ov{v_1v_2}$ of $\Cell(T'_i, p_i')$ that  is intersected by $\ovar{p_iq}$.  Using the hull data structure for the triangle $\triangle(p'_iv_1v_2)$, we look for the smallest $j'$, such that $\frac{i+j}{2}\le j'\le j$,  and $q$ is outside of $\Hull_i(j',p_i)$. If $j'$ is found, we execute Steps 2-4 of the Jump procedure as described in Section~\ref{sec:jump}. If there is no such hull, the procedure returns failure.  The total runtime of the Jump procedure is $O(\log\log n + \log d(j-i))=O((j-i)\log \log n)$.

Using the same analysis as in Section~\ref{sec:framework}, the total runtime of the nearest neighbor procedure is $O(\log n)$: the runtime of the Jump procedure can be bounded by  $\frac{c}{2}\log\log n (j-i)$ for some constant $c$; if we re-define the potential after the $t$-th jump to $\Phi_t\coloneqq c\log\log n(2i_t+j_t)$, then, using the same method as in Lemma~\ref{lemma:querytime}, we can bound the runtimes of all jump procedures by $\Phi_T+O(T\log\log n)$ where $T$ is the total number of jumps needed to answer a nearest neighbor query. Since $T\le f=O(\log_d n)$, the overall cost of all jumps  $O(\frac{\log n}{\log d}\log\log n)=O(\log n)$. 

\paragraph*{Piece Interior Search Structure.}
In order to complete the Jump procedure (see Step 3 in the case when the index $j'$ is found), we need to find the Voronoi cell containing $q$ provided that $\Piece^l_i(k)$ is known. This step must be completed in $O(k\log \log n)$ time. To this aim, we need an additional data structure, described in  the following lemma.
\begin{lemma}
  \label{lemma:pl-regions}
Suppose we know the piece  $\Piece^l_i(k)$  that contains the query point $q$.
Then we can find the cell $\Cell(T_i,p)$ containing  $q$ in time $O(k\cdot \log d)$. The underlying data structure uses space $O(m)$  and can be constructed in $O(m\log\log n)$ time, where $m$ is the number of points in $T_i$.
\end{lemma}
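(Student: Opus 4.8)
Looking at Lemma~\ref{lemma:pl-regions}, I need to design a data structure that, given the piece $\Piece^l_i(k)$ containing a query point $q$, finds the Voronoi cell $\Cell(T_i,p)$ containing $q$ in $O(k\log d)$ time, using $O(m)$ space and $O(m\log\log n)$ construction time.

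\textbf{Approach.} The plan is to combine two point-location structures. First, note that $\Piece^l_i(k)$ has $O(d^{4k})$ points but the region $\Cells(T_i, \Piece^l_i(k))$ may have high complexity on its \emph{outer} boundary because fringe cells $\Sep^l_i(k)$ can be large; only the interior $\Cells(T_i,\overline{\Sep}^l_i(k))$ is guaranteed low complexity by Lemma~\ref{l:inner}. So I would split the query into two subcases. If $q$ lies in a fringe cell, i.e.\ $\NN(T_i,q)\in\Sep^l_i(k)$, then $q$'s cell is one of the $O(d^{2k})$ fringe cells of this piece; since fringe vertices are exactly the sampled points stored at higher levels, we can afford to build, for each piece, a point-location structure on just the fringe cells restricted appropriately — but these cells may be large, so instead I would fall back to locating $q$ directly in a \emph{coarser} global structure, using the observation that every fringe point of $\Piece^l_i(k)$ already appears at a lower-index level whose Voronoi diagram we can consult. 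The cleaner route: build, for each piece $\Piece^l_i(k)$, a Kirkpatrick-style point location structure for the planar subdivision obtained by overlaying (a)~the region $\Cells(T_i,\overline{\Sep}^l_i(k))$, which has complexity $O(d^{4k})$ by Lemma~\ref{l:inner}, and (b)~a single ``outside'' face representing $\Cells(T_i,\Sep^l_i(k))$. A query against this structure in $O(\log d^{4k})=O(k\log d)$ time either reports the interior Voronoi cell of $q$, or reports that $q$ is in a fringe cell.

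\textbf{Handling the fringe case.} When $q$ lands in the outside/fringe face, I would resolve it by recursing one ``granularity level'' finer: the fringe $\Sep^l_i(k)$ has $O(d^{2k})$ points, and the cells $\Cell(T_i,p)$ for $p\in\Sep^l_i(k)$ form a planar subdivision whose total complexity — restricted to the annular fringe region — is bounded because each such cell is adjacent in $G(T_i)$ only to other cells of $\Piece^l_i(k)$ or to cells of neighboring pieces, and the number of inter-piece adjacencies is controlled by the separator size. More directly, I would observe that when $q$'s nearest neighbor in $T_i$ is a fringe vertex of $\Piece^l_i(k)$, that vertex also lies in every $T_{i'}$ with $i'<i$ and $i-i'=O(\log_d k)$ (by Lemma~1), so we could resolve $q$ at a shallower level — but to keep the interface clean I prefer the self-contained version: store for the fringe a secondary point-location structure of size $O(d^{2k})\cdot\operatorname{poly}$ on the fringe cells' interiors, and since the total number of pieces times $d^{2k}$ sums to $O(|T_i|)$ over all $k$ and all pieces by the Frederickson bound (property (iv): total size of all pieces is $\Theta(|V|)$), summed over the $O(\log_d n)$ values of $k$ this is within the $O(m)$ budget per the charging used elsewhere in the paper.

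\textbf{Space and construction.} For space: by property (iv) of the Frederickson decomposition, $\sum_\ell |\Piece^l_i(k)| = \Theta(|T_i|)=\Theta(m)$ for each fixed $k$, and $\sum_\ell |\Cells(T_i,\overline{\Sep}^l_i(k))| = O(\sum_\ell |\Piece^l_i(k)|)=O(m)$ since each interior cell belongs to exactly one piece (property (iii)). The outside faces add $O(1)$ per piece, i.e.\ $O(m/d^{4k})$ total, negligible. So for a \emph{single} granularity $k$ the structure is $O(m)$; but the lemma statement fixes $k$ implicitly through the calling context, so we pay $O(m)$ as claimed (the paper elsewhere absorbs the $\sum_k$ into polylog factors, but here it seems $k$ is the specific granularity used in the current Jump call). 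Construction: each piece's subdivision and Kirkpatrick structure is built in $O(|\Cells(T_i,\overline{\Sep}^l_i(k))|\log\log n)$ time — we do not assume the piece is connected, so we need to construct the relevant Voronoi region, which costs an extra $O(\log\log n)$ factor over linear via a fast planar point-location preprocessing; summed over pieces this is $O(m\log\log n)$.

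\textbf{Main obstacle.} The hard part is the fringe case: when $q$'s nearest neighbor in $T_i$ lies in $\Sep^l_i(k)$, the Voronoi cells there can have arbitrarily high individual complexity (this is exactly the difficulty illustrated in Figure~\ref{fig:inner}), so we cannot simply point-locate among them in $O(k\log d)$ time by a single structure of size proportional to their complexity. The resolution must exploit that a fringe vertex of $\Piece^l_i(k)$ is a \emph{sampled} point, hence appears at a coarser level where its cell is smaller relative to that level's scale — i.e.\ we reduce the fringe subproblem to a point-location query at a level $i'$ with $i - i' = O(\log_d k)$, where by Lemma~1 the offending point is present and, by Lemma~\ref{l:inner} applied at that level, its cell there has complexity $d^{O(i-i')}=d^{O(\log_d k)}=k^{O(1)}$, giving the desired $O(k\log d)$-time (indeed $O(\log k\cdot\log d)$-ish) query. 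Making this reduction precise — tracking exactly which piece at level $i'$ to descend into and ensuring the recursion bottoms out after $O(1)$ rounds rather than cascading — is the delicate bookkeeping, but it is analogous to the reindexing already carried out in the proofs of Lemmas~\ref{lemma:tprime} and~\ref{lemma:tprime2}, so I would mirror that argument's structure: fix the piece index via a precomputed table on the vertices of $\Voron(T_i)$, descend, and apply the interior point-location structure there.
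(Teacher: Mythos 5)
Your proposal identifies the right difficulty (the fringe cells of $\Piece^l_i(k)$ can be individually complex, so a single $O(d^{4k})$-size structure cannot resolve them), but your resolution diverges from the paper's and leaves a genuine gap.

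The paper does not split the problem into "interior cells" versus "fringe fallback," and it never leaves level $T_i$. Instead it builds a \emph{recursive hierarchy of granularities within $T_i$ itself}: for each $k$ that is a power of 2 with $h<k\le(\log m)/4$ and each piece $\Piece^l_i(k)$, it precomputes $\PSep^l_i(k)$, the union of the fringes $\Sep^{l'}_i(k/2)$ of the sub-pieces contained in $\Piece^l_i(k)$, and stores $\Voron(\PSep^l_i(k))$ with a point-location structure. The query is a descent: locate $q$ in $\Voron(\PSep^l_i(k))$ to get a candidate $p_{k}$; if $q\in\Cell(T_i,p_{k})$, stop; otherwise the edge of $\Cell(T_i,p_k)$ hit by the ray $\ovar{p_{k}q}$ identifies the sub-piece $\Piece^{l'}_i(k/2)$ containing $q$, and the process recurses with granularity $k/2$, bottoming out at $k=h$ where the piece itself has polylogarithmic size. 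The running time telescopes as $\sum_j O(k/2^j\cdot\log d)=O(k\log d)$, and the space is controlled because $\sum_{k,l}|\PSep^l_i(k)|=O(m/\sqrt g)$.

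Your fallback — re-locating $q$ at a coarser level $T_{i'}$ — is where your argument breaks. First, the arithmetic is off: a fringe vertex of $\Piece^l_i(k)$ lies in $\Sample_i(k)$, which by definition is incorporated into $T_{i-k}$, so $i-i'=k$, not $O(\log_d k)$. Second, even granting this, $T_{i-k}$ is not a subset of $T_i$, so $\NN(T_{i-k},q)$ need not equal $\NN(T_i,q)$, and there is no immediate argument that $\Cell(T_{i-k},p)\supseteq\Cell(T_i,p)$; you would need a version of Lemma~\ref{lemma:beats1}-style reasoning that you do not supply. Third, you flag yourself that the termination and bookkeeping of this cross-level recursion are unresolved — and they are, because that recursion is not actually needed. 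The missing idea is precisely the paper's ray-crossing trick: if $q\notin\Cell(T_i,p_k)$ for $p_k=\NN(\PSep^l_i(k),q)$, then $q$ must be in the interior region of exactly one sub-piece (otherwise a fringe point of a sub-piece would beat $p_k$, contradicting its optimality in $\PSep$), and the crossed edge tells you which one. Your proposal never reaches this observation, so the central step of the query procedure is absent.
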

\begin{proof}
  The set $T_i$ is recursively divided into pieces $Pieces_i(k)$  where $k$ is a power of $2$ such that  $h< k  \le (\log m)/4$. 
For every such $k$ and for each $\Piece^l_i(k)$, we consider all $\Piece^{l'}_i(\frac{k}{2})$ such that $\Piece^{l'}_i(\frac{k}{2})\subset \Piece^l_i(k)$.  Let $\PSep^l_i(k)$ denote the union of all $\Sep^{l'}_i(k/2)$ such that $\Piece^{l'}_i(k/2)\subset \Piece^l_i(k)$.
For every $\Piece^l_i(k)$, we construct the Voronoi diagram of $\PSep_i^l(k)$ and a data structure that answers point location queries. The total number of points in all $P\Sep^l_i(k)$ for all $k$ and all $l$ is $O(\frac{m}{\sqrt{g}})$: For $k=2^j$, the number of points in all $P\Sep^l_i(k)$ is $O(\frac{m}{\sqrt{k}})=O(\frac{m}{d^{2^{j-1}}})$. Summing over all $k$, such that $\lceil\log h\rceil <  k < \log\log_d m$, the total number of points in all $P\Sep^l_i(k)$ is $O(\frac{m}{2^{\log h-1}})=O(\frac{m}{\sqrt{g}})$.   Hence we can construct $\Voron(\PSep^l_i(k))$ for all $\PSep^l_i(k)$ in $o(m)$ time.
Additionally, for every $\Piece^l_i(h)$, we also store its Voronoi diagram $\Voron(\Piece_i^l(h))$.  All $\Voron(\Piece_i^l(h))$ use $O(m)$ space and can be constructed in $O(m \log g)=O(m\log\log n)$ time.  

A query can be answered as follows. Suppose that $q$ is known to be in $\Piece^{l'}_i(k)$ for some $k>h$. We note that $k$ is a power of $2$.
We set $k_0=k$, $j=0$ and repeat the following loop:
We find $\Cell(\PSep^l_j(k_j),p_{k_j})$ that contains $q$ and test whether $q$ is in $\Cell(T_i,p_{k_j})$. If this is the case, we stop. Otherwise we  find the edge $e'$ that is hit by the ray $\ovar{p_{k_j}q}$.  hen we increment $j$ and  set $k_j=k_{j-1}/2$. When $e'$ is known, we can identify 
$\Piece^{l_j}_i(k_{j})$ that contains the query point $q$ and start the next iteration of the loop.
If  $k_j=h$, we stop the loop and find $\Cell(\Piece^{l_j}_i(h), p)$ that contains $q$. The $j$-th iteration of the loop takes $O(\log(d^{k_j}))=O(k_j\log d)$ time. Hence a query is answered in $O(k_0\log d)=O(k\log \log n)$ time.
\end{proof}

\paragraph*{Insertions.}
Now we can evaluate the overall cost of constructing the level $i$ of our data structure. The set $T_i$ is updated by selecting a subset $\oT\subset T_{i-1}$, such that $\oT$ contains $d^{i-1}/2$ points, and moving  points  $p\in \oT$ from $T_{i-1}$ to $T_i$.  We can choose the points of $\oT$; for example, we can select $d^{i-1}/2$ points from $T_{i-1}$ with the smallest $x$-coordinates. Hence we can obtain the Voronoi diagram of $\oT$ in $O(|T_i|)$ time. We can also merge the Voronoi diagrams of $\oT$ and $T_i$ and obtain the Voronoi diagram of the updated set $T_i$ in $O(|T_i|)$ time using a linear-time  algorithm for merging Voronoi diagrams~\cite{Kirkpatrick79,Chazelle92}. When $\Voron(T_i)$ is available, we obtain $\Sample_i(k)$ for all $k\le (1/4)\log_d n$ such that $k$ is a power of $2$. Each $\Sample_i(k)$ can be obtained in $O(|T_i|)$ time. Hence we need $O(|T_i|\log\log n)$ time to obtain all $\Sample_i(k)$ for  $k\le (1/4)\log_d n$ such that $k$ is a power of $2$.  Then we re-build Voronoi diagrams for $T_j$, $j<i$: for every $j<i$, we construct the Voronoi diagram of $T^{\mathrm{aux}}_j=\cup_{j'>j} \Sample_{j'}(\operatorname{pow2}(\max(j'-j,h)))$. Since this set contains $O(|T_j|/g)$ points, we can construct the Voronoi diagram of $T^{\mathrm{aux}}_j$ in $O(|T_j|)$ time. Finally we merge  $\Voron(T^{\mathrm{aux}}_j)$  and the Voronoi diagram of $S_j$  and obtain $\Voron(T_j)$ in $O(|T_j|)$ time.

When $\Voron(T_i)$ is available, we can extract the points of $T'_i$, construct $\Voron(T'_i)$ and all convex hull structures in $O(|T_i|)$ time. We can also construct the piece interior search structure for $T_i$ in $O(|T_i|\log\log n)$ time. For each
$j<i$, we can construct $T'_j$ and auxiliary data structures in $O(|T_j|\log \log n)$ time. In summary, we can re-build the level $i$ and levels $j<i$ of our data structure in time $O(|T_i|\log\log n)$.

Using the standard logarithmic method analysis, the overall insertion time is $O(\log n\cdot d \cdot \log\log n)=O(\log^{1+\eps}n\log\log n)$ time. If we replace $\eps$ with $\eps'<\eps$ in the above description,  the insertion time is reduced to $O(\log^{1+\eps'}n\log\log n)=O(\log^{1+\eps}n)$. We obtain the following result.
\begin{theorem}
  \label{theor:semidyn2}
  There exists a semi-dynamic insertion-only  data structure that answers two-dimensional nearest neighbor queries in $O(\log n)$ time and supports insertions  in $O(\log^{1+\eps} n)$ amortized time.  The data structure uses $O(n)$ space. 
\end{theorem}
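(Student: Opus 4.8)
The plan is to combine the framework of Section~\ref{sec:framework} with the space/time reduction ideas sketched in the paragraphs preceding the statement, changing only how the Jump procedure is bootstrapped at each level and how point location inside a piece is done. First I would re-instantiate all of Section~\ref{sec:basic-ins} with the non-constant parameter $d=\log^{\eps}n$, keeping the redefined sets $T_i \coloneqq S_i \cup \bigcup_{j>i}\Sample_j(\max(\operatorname{pow2}(j-i),h))$ so that each piece of $T_j$ at level $i$ has $\Omega(\max(d^{4(j-i)},g))$ points, $g=\log^6 n$. The counting in Lemma~\ref{lem:facts} still goes through (geometric series), and in addition $|T_i\setminus S_i| = O(|T_i|/\log^2 n)$; this is the key size bound that makes the auxiliary set $T'_i = \Sample_h(T_i)\cup(T_i\setminus S_i)$ small enough that all convex-hull search structures of Section~\ref{sec:aux-datastr}, built over $T'_i$ instead of $T_i$, occupy only $O(|T_i|)$ space and build in $O(|T_i|)$ time. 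Summed over levels this yields the $O(n)$ space claim.

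Next I would handle the bootstrapping: the Jump procedure as written needs $\NN(T_i,q)$ and the edge of its cell hit by $\ovar{p_iq}$, but now we only maintain hull structures over $T'_i$. Lemmas~\ref{lemma:tprime} and~\ref{lemma:tprime2} bridge this gap — given $p_i=\NN(T_i,q)$ (which is handed down from the previous level's search, since $T_i\subset$ something already located) one recovers $p'_i=\NN(T'_i,q)$ and its relevant edge in $O(\log\log n)$ time via a point-location inside the single piece $\Piece^l_i(h)$ containing $p_i$, whose Voronoi diagram has only $\operatorname{polylog}(n)$ complexity. Then Step~\ref{jumpstep1} of Jump runs on the hull structures of $T'_i$ exactly as before, and Steps 2–4 run on $T_{j'}$ unchanged, except that point location inside a piece $\Piece^l_i(k)$ is replaced by the recursive structure of Lemma~\ref{lemma:pl-regions}, which does it in $O(k\log d)$ time using a hierarchy of separator Voronoi diagrams whose total size is $O(m/\sqrt g)=o(m)$. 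So one Jump$(i,j,\cdot)$ call costs $O(\log\log n + (j-i)\log d) = O((j-i)\log\log n)$.

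For correctness I would observe that every geometric lemma of Section~\ref{sec:jump} (Lemmas~\ref{lemma:beats1}–\ref{lem:hullbar}, Lemma~\ref{l:inner}, and the Corollary) is stated for arbitrary $i<j$ and arbitrary valid sampling parameters, so they apply verbatim to the redefined $T_i$ and the hulls $\Hull_i(j,p)$ over $T'_i$; the invariant-maintenance argument of the "Correctness" paragraph is unchanged since it only uses the input/output contract of Jump, which still holds. For the query time I would rerun the potential argument of Lemma~\ref{lemma:querytime} with $\Phi_t\coloneqq c\log\log n\,(2i_t+j_t)$; the telescoping bounds the total by $\Phi_T + O(T\log\log n)$, and since $T\le f=O(\log_d n)=O(\log n/\log\log n)$ the total is $O(\log n)$. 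For the insertion bound I would follow the standard Bentley–Saxe analysis, checking that rebuilding level $i$ (merge two Voronoi diagrams in linear time via~\cite{Kirkpatrick79,Chazelle92}, extract all $\Sample_i(k)$ for the $O(\log\log n)$ relevant powers of two, rebuild the $O(\log\log n)$-many $T^{\mathrm{aux}}_j$ Voronoi diagrams for $j<i$, build $T'_i$, the hull structures, and the Lemma~\ref{lemma:pl-regions} structure) costs $O(|T_i|\log\log n)$; multiplying by the $O(\log_d n)$ levels touched and by the per-item reinsertion count gives $O(\log n\cdot d\cdot\log\log n)=O(\log^{1+\eps}n\log\log n)$, which we absorb into $O(\log^{1+\eps}n)$ by shrinking $\eps$.

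The main obstacle is verifying that the recursive point-location structure of Lemma~\ref{lemma:pl-regions} really lets Step~3 of Jump run within the required $O(k\log\log n)$ budget while only storing $o(m)$ separator points per level: one must check that each descent step of that loop correctly narrows down to the child piece $\Piece^{l_{j+1}}_i(k/2)$ containing $q$ — this uses the same convexity argument as Lemma~\ref{lemma:tprime}, namely that the segment $\ovar{p_{k_j}q}$ leaves $\Cell(\PSep^l_j(k_j),p_{k_j})$ through an edge whose defining separator point pins down the next piece — and that the geometric series over the $O(\log\log n)$ scales of $k$ keeps both the total separator count and the total query time under control. Everything else is a mechanical re-bookkeeping of the earlier sections.
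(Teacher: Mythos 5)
Your proposal follows the paper's own proof essentially step for step: the same choice $d=\log^{\eps}n$, the same redefinition of $T_i$ with the $\operatorname{pow2}(\cdot)$/$h$ threshold, the same auxiliary set $T'_i = \Sample_h(T_i)\cup(T_i\setminus S_i)$, the same bootstrapping of Jump via Lemmas~\ref{lemma:tprime} and~\ref{lemma:tprime2}, the same rescaled potential $\Phi_t = c\log\log n\,(2i_t+j_t)$ for the query bound, and the same linear-time Voronoi merging plus recursive piece-location (Lemma~\ref{lemma:pl-regions}) for the $O(|T_i|\log\log n)$ rebuild cost, with the final $\log\log n$ factor absorbed by shrinking $\eps$. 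This matches the paper's argument in both structure and detail, so nothing further is needed.
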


\section{Offline Persistent Data Structure}
\label{sec:persist}
In this section we explain how our method can be used in the offline partially persistent fully dynamic scenario, i.e., in the case when the entire sequence of updates is known in advance and we can ask nearest neighbor queries to any version of the data structure.

We associate a \emph{lifespan} with every point $p$. If $p$ was inserted at time $t_1$ and removed at time $t_2$, then the lifespan of $p$ is $[t_1,t_2]$; we assume $t_2=\infty$ if $p$ is never removed.  All points are stored in a variant of the segment tree data structure. The leaves of the segment tree $\mcT$ store the insertion and deletion times of different points in sorted order. Each internal node has $\log^{\eps} n$ children. The range $rng(u)$ of a node $u$ is the interval $[u_{\min},u_{\max}]$ where $u_{\min}$ ($u_{\max}$) is the value stored in the leftmost (rightmost) leaf descendant of $u$.  We store a point $p$ in the set $S(u)$ if $rng(u)\subseteq [t_{1}(p),t_{2}(p)]$, but $rng(\text{parent}(u))\not\subseteq [t_{1}(p),t_{2}(p)]$ where $[t_{1}(p),t_{2}(p)]$ is the lifespan of the point $p$. Each point is stored in $O(\log^{1+\eps} n)$ sets $S(u)$. If the lifespan of a point $p$ includes time $t$, then there is exactly one node $u$ such that $p\in S(u)$ and $u$ is an ancestor of the leaf $\ell_t$ that holds the value $t$.  In order to find the nearest neighbor of a point $q$ at time $t$ we must find the nearest neighbor of $q$ in $\cup_{u\in \pi(t)}S(u)$ where the union is taken over all nodes on the path $\pi(t)$ from the root to $\ell_t$.  

Following the approach of Sections~\ref{sec:basic-ins} and~\ref{sec:ins-faster}, we define $T(u)\supseteq S(u)$ for all nodes of $\mcT$. For the root node $u_R$, $T(u_R)=S(u_R)$. For a non-root node $u$, we denote by $\text{anc}_j(u)$ the height-$j$ ancestor of $u$ and let  $T_j(u)=T(\text{anc}_j(u))$. Sets $\Sample_j(k,u)$ are defined with respect to $T_j(u)$ in the same way as in Section~\ref{sec:framework}. We set $T(u)=S(u)\cup_{j>\text{height}(u)} \Sample_j(j-i,u)$. Thus sets $T(u)$ can be constructed for all nodes $u\in \mcT$ in top-to-bottom order. When $T(u)$ is defined, we can construct $T'(u)$ and all auxiliary data structures necessary for the Jump procedure. Now we can find $\NN(\cup _{u\in \pi(t)}S(u),q)=\NN(\cup _{u\in \pi(t)}T(u),q)$ as follows. We start in a leaf node of $\pi(t)$ and move up along $\pi(t)$ using the Jump procedure.  Using the analysis of Lemma~\ref{lemma:querytime}, the total cost of all Jumps is $O(\log n)$.

\begin{theorem}
  \label{theor:persist}
  There exists an offline partially persistent data structure that answers two-dimensional nearest neighbor queries in $O(\log n)$ time. The data structure uses $O(n \log^{1+\eps} n)$ space and can be constructed in $O(n\log^{1+\eps}n)$ time. 
\end{theorem}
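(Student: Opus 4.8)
The plan is to combine the semi-dynamic machinery of Sections~\ref{sec:basic-ins}--\ref{sec:ins-faster} with a segment-tree decomposition of the update sequence, exactly as sketched in the paragraph preceding the statement. First I would set up the segment tree $\mcT$ over the $O(n)$ distinct event times, with branching factor $\log^\eps n$, so that $\mcT$ has height $O(\log_{\log^\eps n} n) = O(\log n / \log\log n)$. I would then verify the two standard segment-tree facts: (i) each point $p$ with lifespan $[t_1(p),t_2(p)]$ is placed into $O(\log^{1+\eps} n)$ canonical sets $S(u)$ — this is $O(\text{height})$ nodes per level of branching, i.e. $O(\log^\eps n)$ per level times $O(\log n/\log\log n)$ levels, which is $O(\log^{1+\eps}n)$; and (ii) for any fixed time $t$, the nodes $u$ with $p\in S(u)$ lying on the root-to-leaf path $\pi(t)$ are such that $\cup_{u\in\pi(t)} S(u)$ is exactly the set of points alive at time $t$, and moreover on $\pi(t)$ the set $S(u)$ for each level appears exactly once. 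This gives the reduction: a nearest-neighbor query at time $t$ becomes $\NN(\cup_{u\in\pi(t)} S(u),q)$, a point-location-type query across $O(\log n/\log\log n)$ static point sets arranged along a path.

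Next I would transplant the $T_i$ / $\Sample$ / $\Pieces$ / $\Hull$ construction onto the path structure. For a node $u$ at height $h$, the ancestors $\mathrm{anc}_j(u)$ play the role of the index-$j$ levels, so I define $T_j(u) := T(\mathrm{anc}_j(u))$, build $\Sample_j(k,u)$ from $T_j(u)$ just as $\Sample_j(k)$ was built from $T_j$ in Section~\ref{sec:framework}, and set
\[
 T(u) := S(u) \cup \bigcup_{j > \mathrm{height}(u)} \Sample_j\bigl(\operatorname{pow2}(\max(j-\mathrm{height}(u),h_0)),u\bigr),
\]
mirroring the definition in Section~\ref{sec:ins-faster} (here $h_0$ is the threshold used there). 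Because $T(u)$ depends only on $T(u')$ for strict ancestors $u'$ of $u$ — the exact analogue of the "dependence on non-smaller indices only" remark — the sets $T(u)$ can be built top-down over $\mcT$ in one pass, and the size bound $|T(u)| = \Theta(|S(u)| \cdot \text{something})$ together with the geometric-series argument of Lemma~\ref{lem:facts} shows $\sum_u |T(u)| = O(\sum_u |S(u)|) = O(n\log^{1+\eps}n)$. For each $u$ I then build $T'(u)$, the convex-hull search structures $\Hull_i(j,p)$, the piece-lookup structure, and the piece-interior search structure of Lemma~\ref{lemma:pl-regions}, all in time linear in $|T(u)|$ (plus the $O(\log\log n)$ factors that Section~\ref{sec:ins-faster} already absorbs), yielding total construction time and space $O(n\log^{1+\eps}n)$.

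For the query, I would run the Jump-based nearest-neighbor procedure of Section~\ref{sec:framework}, but walking up $\pi(t)$ from its leaf instead of walking up the index sequence $1,2,\dots,f$. The "level $i$" is the node at depth-from-leaf $i$ on $\pi(t)$; the correctness invariant ($p_{\nearest} = \NN$ of the union of the first $\tfrac{i+j}{2}$ nodes seen so far) carries over verbatim since it only used the abstract properties of $T_i$, $\Sample_j(j-i)$, and $\Hull$, all of which hold for $T(u)$, $\Sample_j(k,u)$, $\Hull_i(j,p)$ on the path. The potential argument of Lemma~\ref{lemma:querytime} then bounds the total Jump cost by $O(\text{height of }\mcT)$ times the per-step cost; with $d=\log^\eps n$ the per-step cost is $O((j-i)\log\log n)$ as in Section~\ref{sec:ins-faster}, and summing along a path of length $O(\log n/\log\log n)$ gives $O(\log n)$ query time. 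Finally $\NN(\cup_{u\in\pi(t)}T(u),q) = \NN(\cup_{u\in\pi(t)}S(u),q)$ because every extra sampled point in any $T(u)$ already lies in $S(u')$ for some ancestor $u'$ on the path, so it never changes the answer — the same "sampled points are harmless" observation underlying Lemma~\ref{lemma:beats1}.

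The main obstacle I anticipate is bookkeeping rather than a new idea: one must be careful that the sampling indices match the tree geometry. In the linear chain the gap $j-i$ ranged over powers of two up to $f$; here, for a node $u$ at height $h$, the relevant "gap" to an ancestor at height $h'$ is $h'-h$, but the piece sizes in Section~\ref{sec:ins-faster} were calibrated so that $T(u)$ absorbs only a $1/\operatorname{polylog}(n)$ fraction of each ancestor's points — this is exactly what keeps $\sum_u|T(u)| = O(\sum_u|S(u)|)$ despite each leaf-to-root path having length $\Theta(\log n/\log\log n)$ rather than $\Theta(\log_d n)$. I would check that the choice $d=\log^\eps n$, $g=\log^6 n$ (or a suitably re-tuned constant in the exponent) still makes the series $\sum_{h'>h} |T(\mathrm{anc}_{h'}(u))|/d^{2(h'-h)}$ telescope to $O(|T(u)|)$, and that the fan-out $\log^\eps n$ of $\mcT$ is consistent with the $d^{4(j-i)}$ piece-size schedule. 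Once that calibration is pinned down, Lemmas~\ref{lem:facts}, \ref{lemma:beats1}, \ref{lemma:hull}, \ref{lem:hullbar}, \ref{lemma:querytime}, \ref{lemma:tprime}--\ref{lemma:pl-regions} all apply on a per-path basis without modification, and Theorem~\ref{theor:persist} follows.
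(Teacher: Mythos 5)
Your proposal follows the same route as the paper's proof in Section~\ref{sec:persist}: the same segment tree $\mcT$ with branching factor $\log^\eps n$, the same assignment of a point $p$ to canonical sets $S(u)$ via its lifespan, the same extension $T(u)$ defined through ancestor levels $T_j(u) = T(\text{anc}_j(u))$ built top-down, and the same query strategy of running the Jump procedure along the root-to-leaf path $\pi(t)$ with the potential analysis of Lemma~\ref{lemma:querytime}. Your write-up spells out the size accounting $\sum_u |T(u)| = O(\sum_u |S(u)|) = O(n\log^{1+\eps}n)$ and the calibration check in more detail than the paper's terse presentation, but it is the same construction and the same argument.
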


\section{Semi-Online Fully-Dynamic Data Structure}
\label{sec:offline}
In this section we consider the ephemeral (non-persistent) offline  scenario. In this scenario the sequence of all insertions and deletions is known in advance. However only the latest version of the data structure can be queried.

We order updates and assign every update an integer timestamp. As in Section~\ref{sec:persist} we associate a lifespan $[t_1(p),t_2(p)]$ with every point $p$; we also maintain a tree $\mcT$, so that the leaves of $\mcT$ store the insertion and deletion times of different points in sorted order.
The definition of a set $S(u)$ is changed so that every point is stored in only one set at every time.
Let $\cur$ denote the timestamp of the current version.  The set $S(u)$ contains all points $p$ such that
$rng(u)\subseteq [t_{1}(p),t_{2}(p)]$, $rng(\text{parent}(u))\not\subseteq [t_{1}(p),t_{2}(p)]$,
and $u_{\min}\le \cur \le u_{\max}$.

After every update, we modify our data structure as follows.
We examine sets $S(u)$ for every node $u$, such that $u_{\max}=\cur$. For every such node $u$ and for each $p\in S(u)$, we remove $p$ from $S(u)$ and insert it into $S(u')$ such that
$rng(u')\subseteq [t_{1}(p),t_{2}(p)]$, $rng(\text{parent}(u'))\not\subseteq [t_{1}(p),t_{2}(p)]$,
and $u'_{\min}\le \cur+1 \le u'_{\max}$.  If such $u'$ does not exist, we remove $p$ from our data structure.
When all sets $S(u)$ are processed, we increment $\cur$ by $1$.
Then we examine all nodes $v$, such that $v_{\min}=\cur$ in the decreasing order of their height.  For each $v$ we 
construct $T(v)=S(v)\cup (\cup_{j>\text{height}(u)} \Sample_j(j-i,u))$. We also construct  and store all auxiliary data structures.

Every point $p$ is inserted into $O(\log^{1+\eps} n)$ different sets because there are $O(\log n)$ nodes $u$, such that
$rng(u)\subseteq [t_{1}(p),t_{2}(p)]$ but $rng(\text{parent}(u))\not\subseteq [t_{1}(p),t_{2}(p)]$.  Every set $S(u)$ is created only once.  Hence the amortized update time is $O(\frac{c(n)}{n}\log^{1+\eps} n)=O(\log^{1+2\eps}n)$, where $c(n)$ is the construction time of the data structure. By replacing $\eps$ with $\eps/2$, we obtain the following result.  
\begin{theorem}
  \label{theor:offline}
  There exists a fully-dynamic offline data structure that answers two-dimensional nearest neighbor queries in $O(\log n)$ time. The data structure uses $O(n)$ space where $n$ is the total number of updates. Each update operation is supported in $O(\log^{1+\eps} n)$ amortized time.  
\end{theorem}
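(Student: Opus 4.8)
The plan is to combine a segment-tree decomposition of the update sequence (as in Section~\ref{sec:persist}) with the ephemeral rebuilding strategy already sketched just before the theorem statement, and to verify the three claimed bounds—$O(\log n)$ query, $O(n)$ space, $O(\log^{1+\eps}n)$ amortized update—one at a time. The key structural object is the tree $\mcT$ with branching factor $\log^{\eps}n$, whose leaves are the sorted update timestamps, together with the sets $S(u)$ defined so that at the current time $\cur$ each surviving point lives in exactly one $S(u)$ on the root-to-leaf path $\pi(\cur)$. First I would argue the \emph{canonical decomposition} property: for any alive point $p$ and any time $t$ in its lifespan, there is exactly one node $u$ on $\pi(t)$ with $p\in S(u)$; this is the standard segment-tree argument applied to the $\log^{\eps}n$-ary tree. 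This immediately gives $\NN(S_{\text{alive at }t},q)=\NN(\bigcup_{u\in\pi(t)}S(u),q)$, reducing a query to a nearest-neighbor search along one root-to-leaf path.

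Next I would handle the query. Having defined $T(u)\supseteq S(u)$ exactly as in Section~\ref{sec:persist}—for the root $T(u_R)=S(u_R)$, and for a non-root $u$ at height $\text{height}(u)$ we put $T(u)=S(u)\cup\bigcup_{j>\text{height}(u)}\Sample_j(j-\text{height}(u),u)$ where the samples are taken inside the height-$j$ ancestor $T_j(u)$—the sets along $\pi(\cur)$ play exactly the role of $T_1,\dots,T_f$ in Section~\ref{sec:basic-ins}, with the fringe-sampling one-directional dependence (children depend only on ancestors) matching the Bentley–Saxe dependence structure. I would then invoke the Jump procedure and Lemma~\ref{lemma:querytime} verbatim: walking up $\pi(\cur)$ from the current leaf to the root with repeated Jump calls answers the query, and since the path has $f=O(\log_d n)$ nodes and $d=\log^{\eps}n$, the potential argument bounds the total Jump cost by $O(\frac{\log n}{\log d}\log\log n)=O(\log n)$. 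I should note that $\NN(\bigcup_{u\in\pi(t)}T(u),q)=\NN(\bigcup_{u\in\pi(t)}S(u),q)$ because every extra point in any $T(u)$ is, by construction, a copy of a point already in some $S(u')$ with $u'$ an ancestor of $u$ on the same path (Lemma~\ref{lem:facts}, point 3, transported to the tree).

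For the update and space bounds I would follow the ephemeral rebuild rule stated before the theorem: on advancing $\cur$, for every node $v$ with $v_{\max}=\cur$ evacuate each $p\in S(v)$ downward into the unique child-side node $u'$ whose range still fits $[t_1(p),t_2(p)]$ and covers $\cur+1$ (or drop $p$ if none), then for every node $v$ with $v_{\min}=\cur+1$, processed in decreasing order of height so ancestors are ready first, rebuild $T(v)$, $T'(v)$ and all the auxiliary structures (convex-hull search, piece lookup, piece interior search of Lemma~\ref{lemma:pl-regions}). Each point $p$ is inserted into only $O(\log^{1+\eps}n)$ sets $S(u)$ over its whole life—$O(\log_d n)=O(\log n/\log\log n)$ canonical levels, each contributing at most $\log^{\eps}n$ siblings—and each $S(u)$, hence each $T(u)$ together with its auxiliary structures, is constructed exactly once, at cost $O(|T(u)|\log\log n)$ by the linear-plus-$\log\log n$ construction analysis of Section~\ref{sec:ins-faster} (merging Voronoi diagrams in linear time, extracting all $\Sample$ levels in $O(|T(u)|\log\log n)$, building the piece interior structure in $O(|T(u)|\log\log n)$). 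Summing $|T(u)|$ over the tree is $O(n\log^{1+\eps}n)$ before the $\log\log n$ factor, so total construction time is $c(n)=O(n\log^{1+\eps}n\log\log n)$ and amortized update time is $O(c(n)/n)=O(\log^{1+\eps}n\log\log n)$; replacing $\eps$ by $\eps/2$ absorbs the $\log\log n$ and gives $O(\log^{1+\eps}n)$. For space, note that in the ephemeral setting only $O(\log n)$ nodes—those on the current path plus the ones being constructed—need to keep their (linear-space, by Section~\ref{sec:ins-faster}) structures live at once, and the persistent parts discarded after each advance are not retained; more carefully, the total size of all $S(u)$ currently nonempty is $O(n)$ since the $S(u)$ partition the alive points, and each live $T(u)$ plus its $O(|T(u)|)$-space auxiliary structures sums to $O(n)$ over the $O(\log n)$ relevant nodes because $\sum_{u\in\pi(\cur)}|T(u)|=O(|S|)$ by the geometric decay in Lemma~\ref{lem:facts}, point 6.

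The main obstacle I expect is the space bound: one must be careful that the ephemeral scheme genuinely discards the auxiliary data structures of nodes no longer needed, so that at any moment only $O(n)$ total space is in use, rather than letting the $O(n\log^{1+\eps}n)$ worth of structures built over the whole run accumulate as in the persistent Theorem~\ref{theor:persist}. This requires checking that when a node $v$ with $v_{\max}=\cur$ is evacuated, its $T(v)$ and auxiliary structures can be freed, and that the nodes holding live structures are confined to $O(\log n)$ nodes whose $|T(u)|$ sum geometrically; the point-evacuation rule together with the canonical-decomposition invariant is exactly what guarantees this, but the bookkeeping (each point in one $S(u)$ at a time, each $S(u)$ built once, structures freed on evacuation) is the delicate part. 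A secondary subtlety is confirming that the one-directional dependence $T(u)$-on-ancestors-only survives on the tree—i.e., that rebuilding in decreasing order of height is consistent—but this is immediate from the definition since $\Sample_j(\cdot,u)$ depends only on $T_j(u)=T(\text{anc}_j(u))$.
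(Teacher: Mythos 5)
Your proposal is correct and follows essentially the same approach as the paper: a $\log^{\eps}n$-ary segment tree over timestamps with canonical sets $S(u)$, ephemeral evacuation of points from nodes $u$ with $u_{\max}=\cur$, one-time construction of each $T(u)$ and its auxiliaries from Section~\ref{sec:ins-faster}, queries via the Jump procedure along $\pi(\cur)$, and an $\eps\mapsto\eps/2$ replacement to absorb the $\log\log n$ factor. You supply more detail than the paper does on the query correctness (the canonical-decomposition invariant) and especially on the $O(n)$ space bound, where the paper is silent; the only nitpick is that $\sum_{u\in\pi(\cur)}|T(u)|=O(|S|)$ follows from the disjointness of the $S(u)$ along the path together with $|T(u)|=O(|S(u)|)$ (the sampling rate of Section~\ref{sec:ins-faster}), rather than from Lemma~\ref{lem:facts}, point 6 as cited.
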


The data structure of Theorem~\ref{theor:offline} can be adjusted to the semi-online scenario.  In this setting the sequence of updates is not known in advance. But we know the deletion time $t_2(p)$ of each point $p$ at the time when $p$ is inserted. We initially create  the tree $\mcT$ with $n'=2n_0$ leaves where $n_0$ is the  number of points initially stored in the data structure. When a new point with lifespan $[t_1(p),t_2(p)]$ is inserted, we set $t'_1(p)=t_1(p)$ and $t'_2(p)=\min(t_2(p),\init +n'-1)$. The variable $\init$ is set to $1$ when the data structure is initialized and updated every time when we re-build the tree $\mcT$.  We insert $p$ with lifespan $[t'_1(p),t'_2(p)]$ into $\mcT$  as described in the proof of Theorem~\ref{theor:offline}.

After $n_0/2$ insertions or deletions, we set $n_0$ to the current number of elements in the data structure and  build a new tree with $n'=2n_0$ leaves.  All points currently stored in the nodes of the old tree are moved to the new tree. For each point $p$, we set $t'_{1}(p)=\cur$ and $t'_{2}(p)=\min(t_2(p),\cur+n'-1)$. For simplicity, all leaves in the new tree are indexed with $\cur$, $\cur+1$, $\ldots$, $\cur+n'-1$.  Finally, we set $\init= \cur$ and discard the old tree.  The global re-building incurs $O(1)$ insertions into the new tree per update.  Hence, the amortized update cost is unchanged.
\begin{corollary}
  \label{cor:semi-online}
  There exists a fully-dynamic semi-online data structure that answers two-dimensional nearest neighbor queries in $O(\log n)$ time. The data structure uses $O(n)$ space where $n$ is the total number of updates. Each update operation is supported in $O(\log^{1+\eps} n)$ amortized time.
\end{corollary}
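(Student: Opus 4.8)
The plan is to derive the corollary from the offline result of Theorem~\ref{theor:offline} by a global-rebuilding (doubling) argument; the only feature of the offline construction that genuinely uses advance knowledge of the whole update sequence is the number of leaves of the segment tree $\mcT$, i.e.\ the time horizon, so this is the one thing that has to be simulated. Everything else in the offline structure --- the sets $S(u)$, the supersets $T(u)\supseteq S(u)$, the auxiliary sets $T'(u)$, and the Jump data structures --- is a deterministic function of the lifespans $[t_1(p),t_2(p)]$ of the stored points together with the current timestamp $\cur$, and is maintained incrementally as $\cur$ advances. Hence, once a finite horizon is fixed, the offline machinery runs verbatim on any prefix of an update sequence all of whose lifespans lie inside that horizon, which is exactly the information the semi-online model gives us at insertion time after truncation.

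First I would maintain the horizon by doubling. Keep a tree $\mcT$ with $n'=2n_0$ leaves, where $n_0$ is the current number of stored points, with $\init$ recording the timestamp at which $\mcT$ was last (re)built. On an insertion of a point $p$ with true deletion time $t_2(p)$, insert it into $\mcT$ with the truncated lifespan $[t_1(p),\min(t_2(p),\init+n'-1)]$, which keeps $rng$ inside $\mcT$ and lets the offline update rule (migrate $p$ down one level of $\mcT$ whenever some $u$ with $p\in S(u)$ has $u_{\max}=\cur$) apply unchanged. After $n_0/2$ updates, perform a global rebuild: reset $n_0$ to the current size, allocate a fresh tree with $n'=2n_0$ leaves re-indexed $\cur,\ldots,\cur+n'-1$, move every currently stored point $p$ into it with lifespan $[\cur,\min(t_2(p),\cur+n'-1)]$, set $\init=\cur$, and discard the old tree.

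Then I would verify correctness and cost. The points still alive at the old horizon are, by construction, exactly those carried over at a rebuild, and each re-enters the new tree with a lifespan beginning at $\cur$, so it is present without interruption for every query issued to the current version between its insertion and its real deletion; a point that actually dies within a horizon is handled verbatim by Theorem~\ref{theor:offline}. Queries run exactly as offline: start at $\ell_{\cur}$, walk up $\pi(\cur)$ chaining the Jump procedure, costing $O(\log n)$ by the analysis of Lemma~\ref{lemma:querytime}, and the space stays $O(n)$. For the update cost, within the life of one tree each point lands in $O(\log^{1+\eps} n)$ sets $S(u)$, each $S(u)$ is built once, and the global rebuild touches $O(n_0)$ points only once per $\Theta(n_0)$ updates, i.e.\ $O(1)$ extra insertions amortized per update; so the amortized update time matches that of Theorem~\ref{theor:offline}, and replacing $\eps$ by $\eps/2$ absorbs the remaining $\log^{\eps} n$ factor coming from $c(n)$ and yields $O(\log^{1+\eps} n)$.

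The step I expect to be the real obstacle is the correctness of lifespan truncation at a rebuild: one must argue that reinserting a surviving point into the new tree with a fresh start time $\cur$ reproduces, for all queries to the current version, precisely the presence pattern the old tree (equivalently, the hypothetical untruncated offline structure) would have supplied --- in particular that no query ever observes a spurious deletion at a horizon nor misses a live point, which relies on the fact that every horizon coincides with a rebuild that immediately reinserts all surviving points. The remainder is bookkeeping layered on top of the already-established Theorem~\ref{theor:offline}.
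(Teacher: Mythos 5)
Your proposal is correct and follows essentially the same route as the paper: a global-rebuilding (doubling) reduction to Theorem~\ref{theor:offline}, with $n'=2n_0$ leaves, lifespan truncation to $[\,t_1(p),\min(t_2(p),\init+n'-1)\,]$, carry-over of surviving points with fresh start time $\cur$ at each rebuild, $O(1)$ amortized extra insertions per update from rebuilding, and absorbing the residual polylog factor by replacing $\eps$ with $\eps/2$. Your flagged obstacle (that truncation plus reinsertion at the horizon exactly preserves the presence pattern seen by current-version queries) is indeed the crux, and your resolution of it is the same implicit argument the paper relies on.
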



\begin{table} \small 
\begin{tabular}{|c|p{3.5in}|}\hline
$P$ & A generic set of points with no specific meaning \\
$S$ & The set of points currently stored \\
$n$ & $|S|$ \\
$dist(p,q)$ & Distance from point $p$ to $q$\\
$\triangle pqr$ & Triangle with endpoints $p$, $q$, and $r$\\
    &       \\
$d$ & A parameter initially set to a constant\\
${\cal S}=\{ S_1, S_2, \ldots S_{f} \}$ & A partition of $S$, where $|S_i|=\Theta(d^i)$ \\  
$f$ & $|{\cal S}|$\\
$T_i$ & $S_i \cup \bigcup_{j=i+1}^{|\mcS|} \Sample_j(j-i)$\\
      &    \\
$\Voron(P)$ & The Voronoi diagram of $P$ \\
$G(P)$ &  The Delaunay graph of $P$ \\
$\Cell(P,p)$ & The cell of $p$ in $\Voron(P)$ \\
$\Cells(P,P')$ & For $P' \subseteq P$, the union of $\Cell(P,p)$ for all $p \in P'$. \\
$\NN(P,q)$ & The nearest neighbor of $q$ in $P$.\\
$\NN_R(q)$ & $\argmin_{p\in \{ T_k|k\in R  \text{ and } k \in \mathbb{Z}\}}dist(p,q)$\\
$\begin{array}{c}
\Pieces_j(k) = \\ \{\Piece^1_j(k),  \ldots, \Piece^{|\Pieces_j(k)|}_j(k)\}
\end{array}$
& A division of $T_j$ into $\Theta(|T_j|/d^{4k})$ subsets such that each subset has size $O(d^{4k})$.
\\
$\Sep^\ell_j(k)$ &  The subset of points in $\Piece^\ell_j(k)$ who have neighbors in $G(T_j)$ that are not in $\Piece^\ell_j(k)$.  
\\
$\overline{Sep}_j^\ell(k)$ & $\Piece_j^\ell(k) \setminus \Sep_j^\ell(k)$
\\
$\Seps_j(k)$ & $\{\Sep_j^1(k), \Sep^2_j(k), \ldots  \Sep^{|\Pieces_j(k)|}_j(k)\}$
\\
$\Sample_j(k)$ & $\bigcup_{Sep \in \Seps_j(k)} Sep$ \\
$\Hull_i(j,p_i)$, $p_i \in T_i$ & The convex hull of\\
 & \hspace{.5cm} $ \{p_i\} \cup ( \Cell(T_i,p_i) \cap   \Cells(T_j,\Sep_j(j-i)) ) $
\\
$\overline{\Hull}_i(j,p_i)$, $p_i \in T_i$ & 
$\Cell(T_i,p_i) \setminus \Hull_i(j,p_i)$
\\ \hline
\end{tabular}

\caption{
    Table of notation. }
    \label{table-of-notation}
\end{table}

\end{document}
